\newcommand{\eat}[1]{}
\newtheorem{example}{Example}
\newtheorem{theorem}{Theorem}
\newtheorem{definition}{Definition}
\newtheorem{lemma}{Lemma}
\def\circlerightarrow{\put(2,2.5){\circle{2.5}}\rightarrow}
\def\circleleftarrow{\leftarrow\put(-2,2.5){\circle{2.5}}}
\def\cirlinecir{\put(0.8,2.8){\circle{2.5}}-\put(-0.9,2.8){\circle{2.5}}}
\def\astrightarrow{\put(0.2,-2.2){*}\rightarrow}
\def\astleftarrow{\leftarrow\put(-5,-2.2){*}}
\begin{document}
%
\title{Towards unique and unbiased causal effect estimation from data with hidden variables}
%
%
\author{Debo~Cheng~and~Jiuyong~Li~and~Lin~Liu~and~Kui~Yu~and~Thuc~Duy~Le   and~Jiuxue~Liu
\IEEEcompsocitemizethanks{\IEEEcompsocthanksitem D. Cheng, J. Li, L. Liu, T.D. Le and J. Liu are with STEM, University of South Australia, Mawson Lakes, South Australia, 5095, Australia. \hfil\break
E-mails: chedy055@mymail.unisa.edu.au and Jiuyong.Li@unisa.edu.au
\IEEEcompsocthanksitem K. Yu is with the School of Computer Science and Information Engineering, Hefei University of Technology, Hefei, 230000, China}
}
%
%

\markboth{}%
{Shell \MakeLowercase{\textit{et al.}}: Bare Demo of IEEEtran.cls for IEEE Journals}
%



\maketitle

\begin{abstract}
Causal effect estimation from observational data is a crucial but challenging task. Currently, only a limited number of data-driven causal effect estimation methods are available. These methods either provide only a bound estimation of the causal effect of a treatment on the outcome, or generate a unique estimation of the causal effect, but making strong assumptions on data and having low efficiency. In this paper, we identify a practical problem setting and propose an approach to achieving unique and unbiased estimation of causal effects from data with hidden variables. For the approach, we have developed the theorems to support the discovery of the proper covariate sets for confounding adjustment (adjustment sets). Based on the theorems, two algorithms are proposed for finding the proper adjustment sets from data with hidden variables to obtain unbiased and unique causal effect estimation. Experiments with synthetic datasets generated using five benchmark Bayesian networks and four real-world datasets have demonstrated the efficiency and effectiveness of the proposed algorithms, indicating the practicability of the identified problem setting and the potential of the proposed approach in real-world applications.
\end{abstract}

\begin{IEEEkeywords}
Causal inference, Observational studies, Graphical causal modelling, Confounding bias, Hidden variables.
\end{IEEEkeywords}

%
\IEEEpeerreviewmaketitle

\section{Introduction}
\label{Sec:Intro}
\IEEEPARstart{C}{ausal} inference~\cite{imbens2015causal,pearl2009causality} has been widely studied to understand the underlying mechanisms of phenomena in economics, medicine, and social science, to name but a few. One major task for causal inference is causal effect estimation, e.g. estimating the effect of a drug on a disease or the effect of a policy on the employment rate of a certain population. \emph{Randomized Controlled Trials} (RCTs) are usually used to estimate causal effects. However, RCTs are often not possible to conduct due to ethical concern, cost, or time constraints.

It is desirable to estimate causal effect using observational data since the collection of observational data is increasing rapidly. Confounding bias is a major obstacle to causal effect estimation from data. To remove confounding bias, covariate adjustment is commonly used, but it is challenging to determine from data the \emph{adjustment set}, i.e. the proper covariates for the adjustment.
Graphical causal modelling provides a theoretical foundation for adjustment set selection~\cite{pearl2009causality,maathuis2015generalized,shpitser2012validity,perkovic2017complete}. When a causal DAG ({Directed Acyclic Graph}) or MAG ({Maximal Ancestral Graph}) representing the causal mechanism is given, the {back-door criterion}~\cite{pearl2009causality} or {generalised back-door criterion}~\cite{maathuis2015generalized} can be used to determine an adjustment set for estimating causal effect from observational data.

However, it is impossible to estimate causal effects from data uniquely without additional constraints. In most real-world applications, the causal graphs are unknown. They must be learned from data, but from data, one cannot discover a unique DAG/MAG. What can be learned from data is an equivalence class of DAGs/MAGs encoding the same conditional independence relationships among variables. This results in uncertainty in the adjustment sets discovered from data and the causal effect estimated using the identified adjustment sets. This is why data-driven causal effect estimation methods often return a set of possible causal effects, i.e. a bound estimation, instead of a unique estimation of the causal effect. For example, a widely used data-driven causal effect estimation method, IDA (Intervention when the DAG is Absent)~\cite{maathuis2009estimating} provides a multiset of estimated causal effects based on data without hidden variables. For data with hidden variables, Hyttinen et al. proposed a bound estimator, CE-SAT, which uses do-calculate~\cite{pearl2009causality} and SAT-based inference to estimate causal effect from data~\cite{hyttinen2015calculus}. Malinsky and Spirtes~\cite{malinsky2017estimating} have extended IDA to LV-IDA (Latent variable IDA), but it is also a bound estimation method. Cheng et al.~\cite{cheng2020causal} proposed a local causal search method (DICE) for determining an adjustment set from data with hidden variables, which provides a bound estimation too.

\begin{table}
\small
\centering
\caption{A summary of representative data-driven causal effect estimators.}
    \begin{tabular}{|c|c|c|}
        \hline
         & \multicolumn{2}{|c|}{\textbf{Hidden variables}} \\
        \cline{2-3}
        \textbf{Estimate} & No & Yes \\
        \hline
       Bound & IDA\cite{maathuis2009estimating} &  CE-SAT\cite{hyttinen2015calculus}; LV-IDA\cite{malinsky2017estimating}; DICE\cite{cheng2020causal} \\
        \hline
       Unique & \emph{CovSel}\cite{haggstrom2018data,de2011covariate};  &    GAC\cite{perkovic2017complete}; EHS\cite{entner2013data}; IDP\cite{jaber2019causal};  \\
       & CFRNet\cite{shalit2017estimating}  &    DAVS (this paper)    \\
        \hline
    \end{tabular}
    \label{tab_methodClassificaiton}
\end{table}

The uncertainty in the results of bound estimation methods hinders the applicability of data-driven causal effect estimation. The most common way to achieve a unique estimation is to use extra constraints to eliminate the uncertainty. H\"{a}ggstr\"{o}m~\cite{haggstrom2018data} developed the Bayesian network methods in conjunction with the Covariate Selection algorithms (CovSel for short)~\cite{de2011covariate} to uniquely estimate the causal effect of a treatment on the outcome, with the assumptions that there are no hidden variables, and all other observed variables are pretreatment variables. With the constraint that the data satisfies the unconfoundedness assumption, Shalit et al.~\cite{shalit2017estimating} have developed a deep learning-based causal defect estimation method, CFRNet. 

Table~\ref{tab_methodClassificaiton} provides a summary of the above discussed data-driven methods. Our proposed method DAVS is included in the bottom right cell of the table. DAVS deals with data with hidden variables and provides unique causal effect estimate. There are three other methods in the same category, and we focus our discussions on the three methods below.

Among the three methods in the same category as DAVS in Table~\ref{tab_methodClassificaiton}, the GAC (A data-driven method based on the Generalised Adjustment Criterion) method~\cite{perkovic2017complete} and IDP algorithm~\cite{jaber2019causal} are two recently proposed methods to estimate causal effects based on the Markov equivalence class of MAGs represented by a Partial Ancestral Graph (PAG). Both algorithms are proved to be correct and complete based on a PAG. However, in practice, due to insufficient data and/or unreliable statistical tests, the PAG learned from the data may have errors in edge orientation and thus may not correctly represent the underlying MAG. That is, a number of equivalent PAGs can be learned from the data. Fig.~\ref{fig:EquivalencePAG} given an example where PAG$_1$, PAG$_2$ and other PAGs in the same equivalence class are indistinguishable in data. If PAG$_2$ is used by GAC and IDP, confounder $V_3$ will be missed, and hence the causal effect estimation will be biased. An inconsistent PAG causes GAC to give a biased estimation, and this results in low performance of the GAC, as shown in our experiments. IDP aims to infer a probability expression formula based on \emph{do-calculate}\cite{pearl2009causality} applied to a given PAG, and it is a impractical algorithm since the knowledge about the statistical properties of \emph{do-calculate} is still lacking~\cite{vanderweele2009relative,van2019separators}. Furthermore, it also bears the same problem of the GAC algorithm, i.e. the edge orientation errors in the learned PAG.

To eliminate the uncertainty, we propose to use an anchor node, called a Cause or Spouse of the treatment Only (COSO) variable in this paper, such as node $V_1$ in Fig.~\ref{fig:EquivalencePAG}. By using a COSO variable, we will show that the proper adjustment set will be identified from data, and hence the unbiased causal effect is estimated (The detailed example will be described in Section~\ref{subsec:theorems}). The method proposed by Entner et al.~\cite{entner2013data} i.e. EHS (named after authors' surnames, Entner, Hoyer, and Spirtes) also makes use of the same anchor node. However, EHS needs the pretreatment variable assumption, and EHS is very inefficient (as shown in our experiment) since it performs an exhaustive search over all the variables.

\begin{figure}
	\includegraphics[width=8.7 cm]{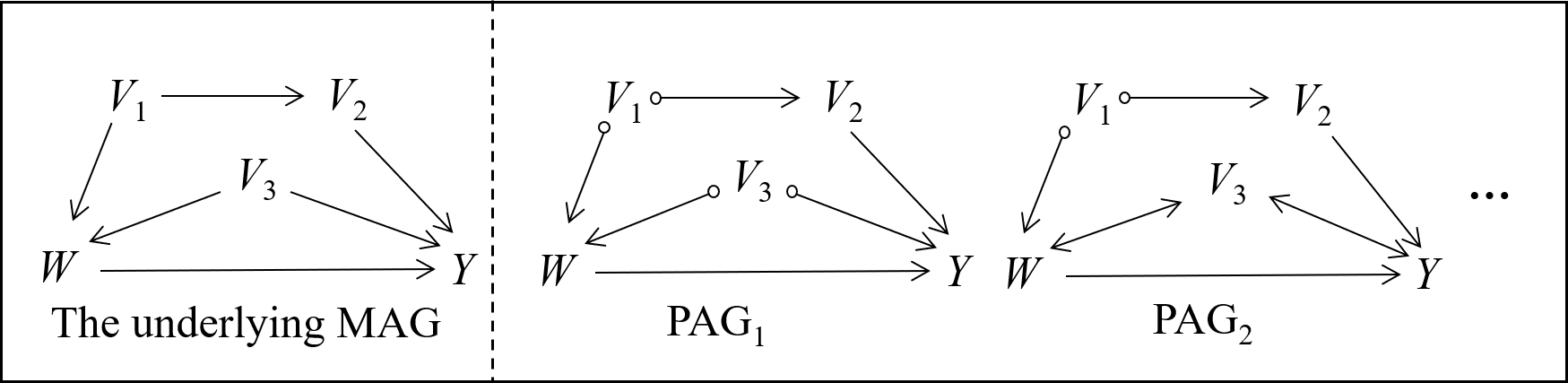}
	\caption{Equivalent PAGs of an underlying MAG in data share the $m$-separations. Set $(V_1, V_3)$, $(V_2, V_3)$ or $(V_1, V_2, V_3)$ is a proper adjustment set. However, if PAG$_2$ is used by GAC or IDP, $V_3$ will be missed in a proper adjustment set and a biased causal effect estimation will be produced. }
	\label{fig:EquivalencePAG}
\end{figure}

In this paper, we aim to develop efficient algorithms and the supporting theory for obtaining unique and unbiased estimation of causal effect from data with hidden variables, under a practical problem setting. The main contributions of the paper are summarized as follows.
\begin{itemize}
  \item We have identified a practical problem setting where causal effects can be estimated uniquely and unbiasedly from data with hidden variables by relaxing the pretreatment variable assumption. We have developed the theorems to support unique and unbiased causal effect estimation from data with hidden variables in the problem setting.
  \item We have developed two algorithms for unique and unbiased causal effect estimation. The developed algorithms are efficient and provide more accurate causal effect estimation than existing methods that deal with hidden variables.
  \item The developed algorithms identify adjustment sets mainly by dependency tests in data directly. This is in contrast to existing methods, which require a correct PAG. From data, our approach mitigates the risk of discovering an inconsistent PAG in a set of equivalent PAGs.
\end{itemize}

\section{Preliminaries and background}
\label{Sec:Pre}
\subsection{Basic definitions and assumptions}
A graph $\mathcal{G}=(\mathbf{V}, \mathbf{E})$ consists of a set of nodes $\mathbf{V}=\{V_{1}, \dots, V_{p}\}$ and a set of edges $\mathbf{E} \subseteq \mathbf{V} \times \mathbf{V}$. In a graph, an edge can be \emph{directed} $\rightarrow$, \emph{bi-directed} $\leftrightarrow$, \emph{non-directed} $\cirlinecir$ or \emph{partially directed} $\circlerightarrow$. We use ``*'' to denote an arbitrary edge mark. For example $\astrightarrow$ can represent $\rightarrow$, $\leftrightarrow$, or $\circlerightarrow$. A \emph{directed graph} consists of only directed edges. A graph which contains only directed or bi-directed edges is a \emph{mixed graph}, and a \emph{partial mixed graph} may contain all types of edges (directed, bi-directed, non-directed and partially directed).

\emph{A path} in a graph is a sequence of distinct adjacent vertices $\langle V_1, \dots, V_k \rangle$ in which each pair of successive vertices is adjacent. On a path $\pi$ $\langle V_1, V_2, \dots, V_k \rangle$ with $k\geq 2$, the vertices $V_1$ and $V_k$ are \emph{endpoints} of $\pi$, and the other rest vertice $V_i$ with $1<i<k$, is a \emph{non-endpoint} vertice of $\pi$. A \emph{causal path} from $V_i$ to $V_j$ is a path on which all edges are directed edges pointing towards $V_j$, and in this case $V_i$ is called an ancestor of $V_i$ or equivalently $V_j$ is a descendant of $V_i$. A \emph{possibly directed path} or \emph{possibly causal path} from $V_i$ to $V_j$ is a path on which none of the edges have an arrowhead pointing towards $V_i$. Here ``Possibly'' implies uncertainty in the direction of the path. For example, $V_i\circlerightarrow V_k \rightarrow V_j$ is a possibly causal path from $V_i$ to $V_j$ because the exact direction of the edge $V_i\circlerightarrow V_k$ is unknown. A path from $V_i$ to $V_j$ that is not possibly causal is called a \emph{non-causal path} from $V_i$ to $V_j$. In a graph, if there exists $V_i \rightarrow V_j$, $V_i$ is a parent of $V_j$ or equivalently $V_j$ is a child of $V_i$. If there is a bi-directed edge between two nodes, e.g. $V_i\leftrightarrow V_j$, $V_i$ is a spouse of $V_j$ or vice versa. When there is a possibly causal path from $V_i$ to $V_j$, $V_i$ is a possible ancestor of $V_j$ and $V_j$ is a possible descendant of $V_i$. In this paper, we use $Pa(V)$,  $Sp(V)$, $An(V)$, $De(V)$, $PossAn(V)$ and $PossDe(V)$ denote the sets of all parents, spouses, ancestors, descendants, possible ancestors, and possible descendants of $V$ respectively.

If there exists $V_k \astrightarrow V_i \astleftarrow V_l$ in a graph, $V_{i}$ is a collider. A \emph{collider path} in a graph is a path on which every non-endpoint node is a collider. A path of length one is a \emph{trivial collider path}. A path $\pi$ $\langle V_i, V_j, V_k\rangle$ with $V_i$ and $V_k$ are (not) adjacent is an $(un)shielded$ $triple$. A path $\pi$ is \emph{unshielded} if all successive triples on $\pi$ are unshielded. A node $V_j$ is a \emph{definite non-collider} on $\pi$ if there exists at least an edge out of $V_j$ on $\pi$, or both edges have a circle mark at $V_j$ and the subpath $\langle V_i, V_j, V_k\rangle$ is an unshielded triple~\cite{zhang2008causal}. A node $V_i$ is named as \emph{definite status} if it is a collider or a definite non-collider on the path. On a path $\pi$, every collider is definite status, and hence, a \emph{definite colldier}. A path $\pi$ is of \emph{definite status} if every node (excluding non-endpoint node) on $\pi$ is of definite status.

\begin{figure}[t]
\centering
    \includegraphics[width=6cm,height=2cm]{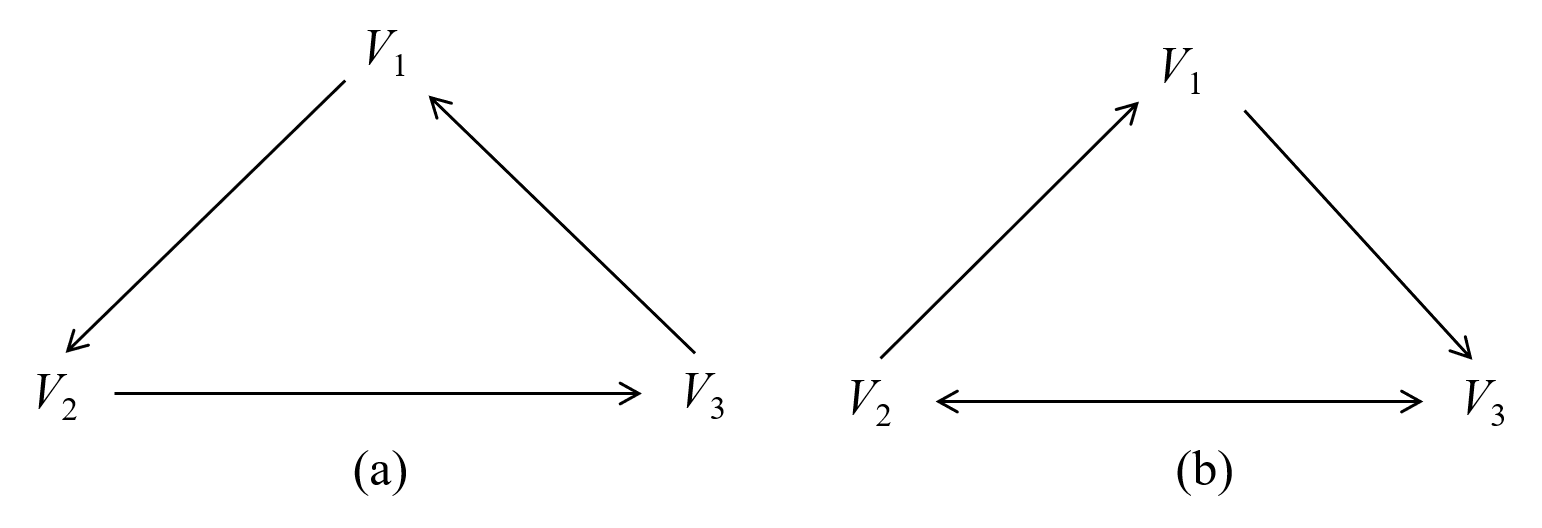}
    \caption{There is a directed cycle between $V_2$ and $V_3$ in (a) and an almost directed cycle between $V_2$ and $V_3$ in (b).}
\label{fig:cycles}
\end{figure}

There is a \emph{directed cycle} between $V_i$ and $V_j$ if $V_i\rightarrow V_j$ and $V_j\in An(V_i)$. There is an \emph{almost directed cycle} between $V_i$ and $V_j$ if $V_i\leftrightarrow V_j$ and $V_j\in An(V_i)$. For example, in Fig.~\ref{fig:cycles} (a), there is a directed cycle between $V_2$ and $V_3$ and in Fig.~\ref{fig:cycles} (b), there is an almost directed cycle between $V_2$ and $V_3$. A \emph{directed acyclic graph} (DAG) contains only directed edges and has no directed cycles. An \emph{ancestral graph}~\cite{richardson2002ancestral} is a mixed graph without directed or almost directed cycles.

In graphical causal modelling, the following condition or assumptions are often involved.

\begin{definition} [Markov condition~\cite{pearl2009causality,richardson2002ancestral}]
\label{def:Markovcondition}
Given a DAG or MAG $\mathcal{G}=(\mathbf{V}, \mathbf{E})$ and $prob(\mathbf{V})$, the joint probability distribution of $\mathbf{V}$, $\mathcal{G}$ satisfies the Markov condition if for $\forall V_i \in \mathbf{V}$, $V_i$ is conditional independent of all non-descendants of $V_i$, given $Pa(V_i)$.
\end{definition}

With a DAG satisfying the Markov condition, the joint distribution of $\mathbf{V}$ can be factorised as $prob (\mathbf{V}) = \prod_i prob(V_i | Pa (V_i))$ by Definition~\ref{def:Markovcondition}.

\begin{definition}[Faithfulness~\cite{Spirtes2000Causation}]
\label{Faithfulness}
A graph $\mathcal{G}=(\mathbf{V}, \mathbf{E})$ is faithful to $prob(\mathbf{V})$ if and only if every independence present in $prob(\mathbf{V})$ is entailed by $\mathcal{G}$ and satisfies the Markov condition. A joint distribution $prob(\mathbf{V})$ is faithful to a graph $\mathcal{G}$ if and only if there exists graph $\mathcal{G}=(\mathbf{V}, \mathbf{E})$ which is faithful to $prob(\mathbf{V})$.
\end{definition}

\begin{definition}[Causal sufficiency~\cite{Spirtes2000Causation}]
A given dataset satisfies causal sufficiency if in the dataset for every pair of observed variables, all their common causes are observed.
\end{definition}

In data, the assumption of causal sufficiency is often unwarranted since there is not a closed-world. \emph{Ancestral graphs} are usually used to represent the underlying causal mechanism of the data which may involve hidden variables (while DAGs are often used to represent causal mechanisms without hidden variables).

The concept of m-separation~\cite{richardson2002ancestral} is used to link an ancestral graph with the conditional dependencies/independencies in a probabilistic distribution.

\begin{definition}[m-separation~\cite{richardson2002ancestral}]
\label{m-separation}
In an ancestral graph $\mathcal{G}=(\mathbf{V}, \mathbf{E})$, a path $\pi$ between $V_{i}$ and $V_{j}$ is said to be m-separated by a set of nodes $\mathbf{Z}\subseteq \mathbf{V}\setminus\{V_i, V_j\}$ (possibly $\emptyset$) if every non-collider on $\pi$ is a member of $\mathbf{Z}$; and every collider on $\pi$ is not in $\mathbf{Z}$ and none of the descendants of the colliders is in $\mathbf{Z}$.
Two nodes $V_{i}$ and $V_{j}$ are said to be m-connected by $\mathbf{Z}$ in $\mathcal{G}$ if $V_{i}$ and $V_{j}$ are not m-separated by $\mathbf{Z}$.
\end{definition}

When two ancestral graphs represent the same m-separations among the observed variables $\mathbf{V}$, they are called \emph{Markov equivalent} as defined below.

\begin{definition}[Markov equivalent~\cite{ali2009markov}]
  \label{markovequ}
Two ancestral graphs $\mathcal{G}_1$ and $\mathcal{G}_2$ with the same set of nodes are said to be \emph{Markov equivalent}, denoted $\mathcal{G}_1 \sim \mathcal{G}_2$, if for any three disjoint sets $\mathbf{V}_1$, $\mathbf{V}_2$ $\mathbf{V}_3$ ($\mathbf{V}_1$, $\mathbf{V}_2$ not empty), $\mathbf{V}_1$ and $\mathbf{V}_2$ are m-separated given $\mathbf{V}_3$ in $\mathcal{G}_1$ if and only if $\mathbf{V}_1$ and $\mathbf{V}_2$ are m-separated given $\mathbf{V}_3$ in $\mathcal{G}_2$.
\end{definition}

The set of all ancestral graphs that encode the same set of m-separations forms a \emph{Markov equivalence class}~\cite{ali2009markov}.

\begin{definition}[MAG~\cite{richardson2002ancestral}]
 \label{MAG}
An ancestral graph $\mathcal{G}=(\mathbf{V}, \mathbf{E})$ is referred to as a maximal ancestral graph (MAG) when for every pair of non-adjacent nodes $V_i$ and $V_j$ in $\mathbf{V}$, there exists $\mathbf{Z}\subseteq \mathbf{V}\setminus\{V_i, V_j\}$ such that $V_i$ and $V_j$ are m-separated by $\mathbf{Z}$.
\end{definition}

A DAG has obviously met both conditions in~\ref{MAG}, so syntactically, a DAG is also a MAG without bi-directed edges~\cite{zhang2008causal}. A set of Markov equivalent MAGs can be represented uniquely by a \emph{partial ancestral graph} (PAG).

\begin{definition}[PAG~\cite{zhang2008causal}]
\label{def:pag}
Let $[\mathcal{G}]$ be the Markov equivalence class of a MAG $\mathcal{G}$. The PAG $\mathcal{G}$ for $[\mathcal{G}]$ is a partial mixed graph such that (I). $\mathcal{G}$ has the same adjacent relations among nodes as $\mathcal{G}$ does; (II). For an edge, its a mark of arrowhead or mark of tail is in $\mathcal{G}$ if and only if the same mark of arrowhead or the same mark of tail is shared by all MAGs in $[\mathcal{G}]$.
\end{definition}

\subsection{Confounding adjustment}
\label{subsec:causal}
Let $\mathcal{G} = (\mathbf{V}, \mathbf{E})$ be a graph, and $\mathbf{V}=\{W, Y\}\cup \mathbf{X}$, where $W$ is the treatment variable of interest, $Y$ the outcome variable, and $\mathbf{X}$ the set of all other observed variables. We would like to query a dataset for the causal effect of $W$ on $Y$ from data. In the corresponding causal diagram, there is at least one causal path from $W$ to $Y$. In this paper, we are interested in estimating the average causal effect of $W$ on $Y$, as defined below.

\begin{definition}[Average causal effect]
\label{def:ATC}
The Average Causal Effect of $W$ on $Y$ is defined as $ACE(W, Y)=\mathbf{E}(Y|do~(W=1))-\mathbf{E}(Y|do~(W=0))$, where $do(W=w)$ is the $do$-operator indicating the manipulation of $W$ by setting it to the value $w$~\cite{pearl2009causality}.
\end{definition}

Given a proper adjustment set $\mathbf{Z}\subseteq \mathbf{X}$, $ACE(W, Y)$ can be estimated unbiasedly by
\begin{equation}
\label{eq:eq01}
\begin{aligned}
ACE(W,Y)&=\sum_{\mathbf{z}}[\mathbf{E}(Y|W=1,\mathbf{Z}=\mathbf{z})-\\
&\mathbf{E}(Y|W=0,\mathbf{Z}=\mathbf{z})]prob(\mathbf{Z}=\mathbf{z})
\end{aligned}
\end{equation}
When there are no hidden variables, and the causal DAG representing the causal mechanism generating the data is known, the back-door criterion~\cite{pearl2009causality} can be used to identify an adjustment set from the DAG. When there are hidden variables, the causal mechanism generating the data can be represented by a MAG, and the generalized adjustment criterion~\cite{perkovic2017complete} can be used to identify an adjustment set based on the MAG (or a PAG). To introduce the generalised adjustment criterion, we firstly present the following relevant concepts as below.

\begin{definition}[Visibility~\cite{zhang2008causal}]
 \label{Visibility}
Given a PAG or MAG $\mathcal{G}$, a directed edge $X_{i}\rightarrow X_{j}$ in $\mathcal{G}$ is \textbf{visible} if there is a node $X_{k}$ not adjacent to $X_{j}$, such that either there is an edge between $X_{k}$ and $X_{i}$ that is into $X_{i}$, or there is a collider path between $X_{k}$ and $X_{i}$ that is into $X_{i}$ and every node on this path is a parent of $X_{j}$. Otherwise, $X_{i}\rightarrow X_{j}$ is said to be \textbf{invisible}.
\end{definition}

Fig.~\ref{fig:visibleedge} shows two possible structures that can result in a visible edge. As mentioned earlier, we only consider query the effect of a single treatment variable $W$ on a single response variable $Y$ of interest in the whole paper. Hence, the following definitions are introduced under the single treatment $W$ and response $Y$, and the original definitions on the set of $\mathbf{W}$ and $\mathbf{Y}$ can be found in the literature~\cite{maathuis2015generalized,perkovic2017complete,van2014constructing}.

\begin{figure}
\centering
    \includegraphics[width=5cm,height=2cm]{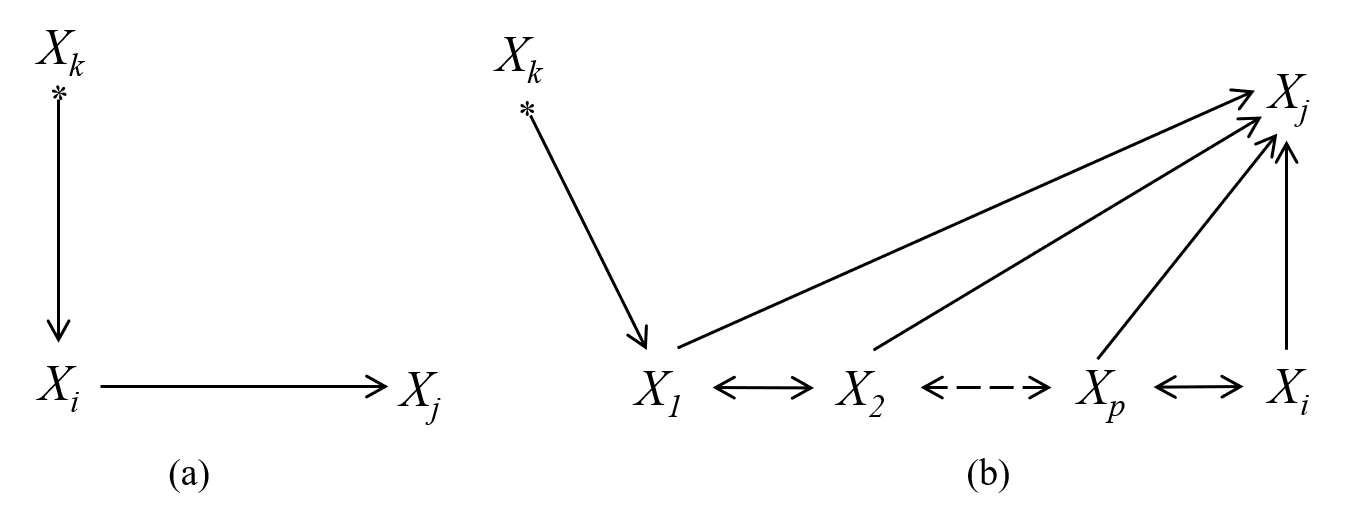}
    \caption{Two possible structures in MAGs and PAGs for a visible edge $X_i\rightarrow X_j$~\cite{zhang2008causal}. Nodes $X_k$ and $X_j$ must be nonadjacent.}
\label{fig:visibleedge}
\end{figure}


\begin{definition}[Amenability~\cite{van2014constructing}]
\label{Amenability}
Given a PAG or MAG $\mathcal{G}$ and a pair of variables $(W, Y)$ in $\mathcal{G}$, $\mathcal{G}$ is adjustment amenable w.r.t. $(W, Y)$ if each possibly directed path from $W$ to $Y$ in $\mathcal{G}$ starts with a visible edge out of $W$.
\end{definition}

\begin{definition}[Forbidden set $Forb(W, Y, \mathcal{G})$]
\label{def:forbiden}
Given a PAG or MAG $\mathcal{G}$ and a pair of variables ($W, Y$) in $\mathcal{G}$, the forbidden set w.r.t. ($W, Y$) is defined as
$Forb(W, Y, \mathcal{G})=\{X\in \mathbf{V}: X\in PossDe(W)$, and $X$ lies on a possibly directed path from $W$ to $Y$ in $\mathcal{G}\}$.
\end{definition}

The forbidden set in $\mathcal{G}$ contains the variables that cannot be included in an adjustment set. Following~\cite{colombo2012learning}, for describing a small number of candidate adjustment sets, we need to define possible d-separation set, denoted as $pds(W, Y, \mathcal{G})$.

\begin{definition}[$pds(W, Y, \mathcal{G})$~\cite{colombo2012learning}]
\label{def:pds}
Given a PAG or MAG $\mathcal{G}$ and a pair of variables $(W, Y)$ in $\mathcal{G}$, $X\in pds(W, Y, \mathcal{G})$ if and only if there is a path $\pi$ between $X$ and $W$ in $\mathcal{G}$ such that for every subpath $<X_l, X_k, X_h>$ on $\pi$, either $X_k$ is a collider or $<X_l, X_k, X_h>$ is a triangle, i.e. each pair of nodes in the triple are adjacent.
\end{definition}

\begin{definition}[Generalised back-door path~\cite{maathuis2015generalized}]
\label{generalised back door path}
Given a MAG or PAG $\mathcal{G}$ and a pair of variables ($W, Y$) in $\mathcal{G}$, a path between $W$ and $Y$ is a generalised back-door path from $W$ to $Y$ if it does not have a visible edge out of $W$.
\end{definition}

Now, we introduce the generalised adjustment criterion, which is a sufficient and complete graphical adjustment criterion for identifying an adjustment set from a MAG or PAG~\cite{perkovic2017complete}.

\begin{definition}[Generalised adjustment criterion~\cite{perkovic2017complete}]
\label{def:GAC}
Given a MAG or PAG $\mathcal{G}$ with a pair of variables $(W, Y)$ and a set of nodes $\mathbf{Z}$ in $\mathcal{G}$. Then $\mathbf{Z}$ satisfies the generalised adjustment criterion relative to $(W, Y)$ if (I). $\mathcal{G}$ is adjustment amenable relative to $(W, Y)$, (II). $\mathbf{Z}\cap Forb(W, Y, \mathcal{G})= \emptyset$, and (III). all definite status non-causal paths between $W$ and $Y$ are blocked by $\mathbf{Z}$ in $\mathcal{G}$.
\end{definition}

Note that the method developed by Perkovi\'{c} et al. based on the generalised adjustment criterion is also referred as GAC, but it is an algorithm based on the generalised adjustment criterion and itself is not the criterion.

\section{Causal effect estimation from data with hidden variables}
\label{Sec:Alg}
In this section, we develop the data-driven method for unique and unbiased causal effect estimation from data with hidden variables, including the problem setting, i.e. the identified case where the causal effect can be uniquely and unbiasedly estimated, the theorems, and the algorithms.

\subsection{Problem setting}
As mentioned previously, from observational data, what we can learn is an equivalence class of DAGs/MAGs. Hence the problem of causal effect estimation does not have a unique solution in general. To obtain a unique estimation of a causal effect, assumptions are needed. In this paper, we have identified a practical case where a causal effect can be uniquely estimated from data with hidden variables. Specifically, we assume that in the equivalence class of the MAGs learned (represented by a PAG), there exists a Cause Or Spouse of the treatment Only (COSO) variable, as defined below.

\begin{definition}[COSO variable]
\label{def:COSO}
Let $\mathcal{G}$ be the PAG learned from a dataset with treatment $W$, outcome $Y$ and the set of other observed variables $\mathbf{X}$. A variable $Q\in\mathbf{X}$, is a COSO variable if $Q\in (Pa(W)\cup Sp(W))$ and $Q\notin (Pa(Y)\cup Sp(Y))$.
\end{definition}

A COSO variable can be easily found in many applications. For example, when studying the effect of smoking on lung cancer~\cite{Spirtes2000Causation}, family influence causes smoking but does not cause lung cancer directly. Family influence is a COSO variable w.r.t. (smoking, lung cancer). In the study of the effect of job training on income, marriage impacts job training but does not directly affect income~\cite{lalonde1986evaluating}. Marriage is a COSO variable for (job training, income). Compared to the existing data-driven methods for unique and unbiased causal effect estimation~\cite{entner2013data,de2011covariate,shalit2017estimating,haggstrom2018data}, which require that $\mathbf{X}$ contains only pretreatment variables or satisfies the unconfoundedness assumption (as discussed in the Introduction~\ref{Sec:Intro}), our COSO variable assumption is more practical and enables broader real-world applications. Compared to GAC which read off an adjustment set $\mathbf{Z}$ from the learned PAG since $W\Vbar Y\mid \mathbf{Z}$ can not be tested from data directly, using a COSO variable to eliminate (if COSO is given by the user) or reduce (local discovery from data) the uncertainly by identifying an adjustment set $\mathbf{Z}$ from data directly as described below.

\subsection{The theorems}
\label{subsec:theorems}
We would like to estimate the causal effect a treatment of interest, $W$ on $Y$, the outcome from data with hidden variables, and thus we consider there is at least one causal path from $W$ to $Y$ in the corresponding causal diagram, since otherwise, $W$ would not be able to have causal effect on $Y$~\cite{pearl2009causality}. We set out to come up with our theorems to identify adjustment sets by conditional independence tests. For proving our theorems, we first propose a lemma as below. 

\begin{lemma}
\label{lemma}
    Given a MAG or PAG $\mathcal{G}$ with a pair of variables $(W, Y)$ and a set of nodes $\mathbf{Z}$ in $\mathcal{G}$. $\mathcal{G}$ is adjustment amenable relative to $(W, Y)$ and $\mathbf{Z}\cap Forb(W, Y, \mathcal{G})= \emptyset$. If all generalised back-door paths between $W$ and $Y$ are blocked by $\mathbf{Z}$ in $\mathcal{G}$, then $\mathbf{Z}$ blocks all definite status non-causal paths between $W$ and $Y$ in $\mathcal{G}$.
\end{lemma}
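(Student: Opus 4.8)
The plan is to argue by contraposition within the family of definite status paths, using the observation that, under amenability, every generalised back-door path is already non-causal (a possibly causal path must start with a visible edge out of $W$, so a path not starting this way cannot be possibly causal). Hence the generalised back-door paths form a sub-family of the definite status non-causal paths, and the real content of the lemma is to upgrade ``all back-door paths blocked'' to ``all non-causal paths blocked''. Concretely, I would take an arbitrary definite status non-causal path $\pi$ between $W$ and $Y$ and show that $\mathbf{Z}$ blocks it. If the edge of $\pi$ incident to $W$ is not a visible directed edge out of $W$ --- that is, it carries an arrowhead or a circle mark at $W$, or is an invisible directed edge out of $W$ --- then $\pi$ is by definition a generalised back-door path and is blocked by hypothesis. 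The entire difficulty is therefore concentrated in the complementary case, where $\pi$ begins with a visible edge $W \rightarrow V_2$.

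In that case I would analyse the initial segment of $\pi$. Let $V_m$ be the first collider met when traversing $\pi$ away from $W$; such a collider must exist, for if $\pi$ had none then, being of definite status and leaving $W$ through an arrowhead at $V_2$, each successive definite non-collider would be forced to carry a tail on its far side, making all of $\pi$ a directed path $W \rightarrow V_2 \rightarrow \cdots \rightarrow Y$ and contradicting that $\pi$ is non-causal. Applying the same forcing only up to $V_m$ shows that the prefix $W \rightarrow V_2 \rightarrow \cdots \rightarrow V_m$ is a directed path, so each of $V_2, \dots, V_m$ lies in $De(W) \subseteq PossDe(W)$. For $\pi$ to remain m-connecting given $\mathbf{Z}$, the collider $V_m$ must be activated, i.e. either $V_m \in \mathbf{Z}$ or some node of $De(V_m)$ lies in $\mathbf{Z}$; call such a witness $D$, which again lies in $De(W)$.

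The decisive step is then to show that this activating node $D$ belongs to $Forb(W, Y, \mathcal{G})$, which would contradict $\mathbf{Z} \cap Forb(W, Y, \mathcal{G}) = \emptyset$ and force $V_m$ to be closed, so that $\mathbf{Z}$ blocks $\pi$ at $V_m$. I expect this to be the main obstacle. The easy half is that $D$ is a possible descendant of $W$; the hard half is exhibiting a possibly causal route that continues on to $Y$, since forbidden-set membership ultimately requires a possibly directed path from $W$ to $Y$ passing through the relevant node. Crucially, the suffix of $\pi$ from $V_m$ to $Y$ cannot be reused for this, because $\pi$ enters $V_m$ through an arrowhead --- that is precisely what made $V_m$ a collider --- so this suffix is not causal out of $V_m$, and a genuinely separate possibly directed continuation to $Y$ is needed. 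Closing this gap is where the amenability assumption and the precise definition of $Forb(W, Y, \mathcal{G})$ must be brought fully to bear, and I would concentrate the bulk of the effort there, extracting from the visible-edge structure at $W$ together with the directed prefix $W \rightarrow \cdots \rightarrow V_m$ a possibly directed path reaching $Y$ that places $V_m$, and hence $D$, in the forbidden set.
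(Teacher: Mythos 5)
Your reduction is sound, and up to its crux it is essentially the same decomposition as the paper's own proof: both split the definite status non-causal paths into the generalised back-door paths (blocked by hypothesis) and the paths leaving $W$ through a directed edge, both observe that such a path must contain a collider preceded by a (possibly) directed prefix out of $W$, and both must then argue that neither this first collider $V_m$ nor any node capable of activating it can belong to $\mathbf{Z}$, by appeal to $\mathbf{Z}\cap Forb(W,Y,\mathcal{G})=\emptyset$. The difference is that the paper ``resolves'' the step you left open in a single unjustified assertion --- it simply declares that the collider $C$ (respectively $X_k$ in its bidirected subcase) lies in $Forb(W,Y,\mathcal{G})$, without exhibiting the possibly directed path from that node to $Y$ that Definition~\ref{def:forbiden} requires, and it never considers activation of the collider through a descendant in $\mathbf{Z}$ lying off the path, which is exactly the node $D$ in your argument. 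So your proposal is genuinely incomplete at the decisive step, but your diagnosis of where the difficulty is concentrated is accurate: it is precisely the point the paper's proof glosses over.

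Moreover, the gap cannot be closed as you hoped, because the claim ``the activating node lies in $Forb(W,Y,\mathcal{G})$'' is false under Definition~\ref{def:forbiden}: the first collider and its descendants are descendants of $W$, but need not lie on any possibly directed path from $W$ to $Y$. Concretely, take the DAG (hence MAG) with edges $Q\rightarrow W$, $W\rightarrow Y$, $W\rightarrow A$, $B\rightarrow A$, $B\rightarrow Y$, $A\rightarrow D$, and $\mathbf{Z}=\{D\}$. Since $Q$ is adjacent to neither $A$ nor $Y$, the edges $W\rightarrow A$ and $W\rightarrow Y$ are visible and $\mathcal{G}$ is adjustment amenable; the only possibly directed path from $W$ to $Y$ is $W\rightarrow Y$, so $Forb(W,Y,\mathcal{G})=\{Y\}$ and $\mathbf{Z}\cap Forb(W,Y,\mathcal{G})=\emptyset$; and there is no generalised back-door path from $W$ to $Y$ (the only edge into $W$ comes from the dead end $Q$), so the blocking hypothesis holds vacuously. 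Yet the definite status non-causal path $W\rightarrow A\leftarrow B\rightarrow Y$ is m-connecting given $\mathbf{Z}$, because the collider $A$ has the descendant $D\in\mathbf{Z}$ while the non-collider $B\notin\mathbf{Z}$. Hence the lemma itself fails with $Forb$ as defined; it becomes true only under the stronger hypothesis $\mathbf{Z}\cap PossDe(W)=\emptyset$ --- which is what the proof of Theorem~\ref{theo:theo01} informally asserts the $Forb$-condition achieves --- and under that hypothesis your argument closes immediately, since the activating node satisfies $D\in De(V_m)\subseteq De(W)\subseteq PossDe(W)$, with no appeal to the forbidden set needed.
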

\begin{proof}
    All definite status non-causal paths between $W$ and $Y$ in Definition~\ref{def:GAC}, are either the generalised back-door paths between $W$ and $Y$, or $W\circlerightarrow \dots \astrightarrow C \astleftarrow \dots Y$. The latter is natural blocked by $\emptyset$ since $C$ is a collider on the path. There are two situations for the latter case, (1). if the path $\pi$ is form of $W\circlerightarrow \dots \circlerightarrow C \astleftarrow \dots Y$, then $C\in Forb(W, Y, \mathcal{G})$ such that $C\notin \mathbf{Z}$  because $\mathbf{Z}\cap Forb(W, Y, \mathcal{G})= \emptyset$. (2). if the path $\pi$ is form of $W\circlerightarrow \dots \circlerightarrow X_k \leftrightarrow C \astleftarrow \dots Y$ and $C\in \mathbf{Z}$, then $X_k\notin \mathbf{Z}$ since $X_k \in Forb(W, Y, \mathcal{G})$.

    Therefore, if all generalised back-door paths between $W$ and $Y$ are blocked by the set $\mathbf{Z}$ in $\mathcal{G}$, then $\mathbf{Z}$ satisfies the blocking condition in the generalised adjustment criterion.
\end{proof}

Now we present the following theorem presents the condition for searching for an adjustment set in data with hidden variables under our problem setting.

\begin{theorem}
\label{theo:theo01}
Given a PAG $\mathcal{G}$ containing a pair of variables $(W, Y)$ and the other variables $\mathbf{X}$, and assume that there exists a COSO variable $Q\in\mathbf{X}$. There exists a set of variables $\mathbf{Z} \subseteq\mathbf{X}\setminus (\{Q\} \cup Forb(W, Y, \mathcal{G}))$, then $\mathbf{Z}$ is an adjustment set if and only if $Q\Vbar Y|\mathbf{Z}\cup \{W\}$.
\end{theorem}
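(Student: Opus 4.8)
The statement equates a purely graphical property of $\mathbf{Z}$ (being a valid adjustment set, i.e.\ satisfying the generalised adjustment criterion of Definition~\ref{def:GAC}) with one data-testable conditional independence, $Q\Vbar Y\mid\mathbf{Z}\cup\{W\}$. My plan is to use Lemma~\ref{lemma} to reduce the criterion to the blocking of generalised back-door paths between $W$ and $Y$, and then to use the edge $Q\astrightarrow W$ supplied by the COSO assumption to set up a two-way correspondence between back-door paths connecting $W$ and $Y$ given $\mathbf{Z}$ and paths connecting $Q$ and $Y$ given $\mathbf{Z}\cup\{W\}$. Since $Q\Vbar Y\mid\mathbf{Z}\cup\{W\}$ is exactly the absence of a (definite-status) $m$-connecting $Q$--$Y$ path, the theorem amounts to showing that such a connecting path exists precisely when some back-door path is left open by $\mathbf{Z}$. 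Throughout I take $\mathcal{G}$ to be amenable w.r.t.\ $(W,Y)$ (otherwise no adjustment set exists), and to avoid the circle-mark ambiguities of a PAG I would argue in a MAG of the equivalence class $[\mathcal{G}]$, where $m$-separation is clean, every path is definite status, and the separations agree with the independencies in the data.

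First I would record the structural facts that the COSO assumption buys. Because $Q\in Pa(W)\cup Sp(W)$ there is an edge $Q\to W$ or $Q\leftrightarrow W$, which carries an arrowhead into $W$; this is the arrowhead I use to open a collider at $W$ by conditioning on $W$. Next, $Q$ and $Y$ are non-adjacent: $Q\in Pa(Y)\cup Sp(Y)$ is excluded by definition, and $Y\to Q$ or $Y\leftrightarrow Q$ would, together with $Q\astrightarrow W$ and a causal path $W\to\cdots\to Y$, form a directed or almost directed cycle, impossible in an ancestral graph. Finally $Q\notin\mathbf{Z}$, since $\mathbf{Z}\subseteq\mathbf{X}\setminus\{Q\}$.

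For ``$\mathbf{Z}$ adjustment set $\Rightarrow Q\Vbar Y\mid\mathbf{Z}\cup\{W\}$'' I argue by contraposition, extracting from an $m$-connecting $Q$--$Y$ path $p$ given $\mathbf{Z}\cup\{W\}$ (length at least two, as $Q\not\sim Y$) an open back-door path between $W$ and $Y$ given $\mathbf{Z}$; as an adjustment set blocks every definite-status non-causal path, hence every back-door path in a MAG of $[\mathcal{G}]$, this is a contradiction. If $W$ is interior to $p$ it must be a collider there (a conditioned non-collider would block $p$), so the $W$-to-$Y$ portion of $p$ starts with an arrowhead into $W$ and is an open back-door path; if $W\notin p$ I prepend $Q\astrightarrow W$ and use $Q\notin\mathbf{Z}$ to keep the new node $Q$ from blocking. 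Conversely, if $\mathbf{Z}$ is not an adjustment set then, as $\mathbf{Z}\cap Forb(W,Y,\mathcal{G})=\emptyset$ and $\mathcal{G}$ is amenable, the contrapositive of Lemma~\ref{lemma} yields an open back-door path $\pi$ from $W$ to $Y$ given $\mathbf{Z}$; prepending $Q\astrightarrow W$ makes $W$ a collider on $Q\astrightarrow W\,[\pi]\,Y$, which conditioning on $W$ opens, while the extra node $W$ (not interior to $\pi$) blocks none of $\pi$'s non-colliders, so $Q$ and $Y$ are $m$-connected given $\mathbf{Z}\cup\{W\}$ — contradicting the independence.

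The bulk of the work, and the main obstacle, is the edge-mark bookkeeping that makes these two translations airtight. Two points need care. A generalised back-door path need not start with an arrowhead into $W$: it may leave $W$ along an \emph{invisible} directed or circle edge, and I would use amenability together with the into-$W$ arrowhead of $Q\astrightarrow W$ (which tends to render edges out of $W$ visible, since $W\to A$ is visible whenever $Q\not\sim A$) to reduce to the arrowhead-into-$W$ case. Symmetrically, when $Q\leftrightarrow W$ and the $Q$-side of the path carries an arrowhead into $Q$, prepending the edge can turn $Q$ into a collider; since $Q\notin\mathbf{Z}$ I must then exhibit a descendant of $Q$ in $\mathbf{Z}$ that activates it, or select a different representative path. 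Finally, when moving $W$ in or out of the conditioning set I must ensure that no collider on the $W$--$Y$ portion was kept open \emph{solely} because $W$ is its descendant; ruling this out should follow from acyclicity together with the forbidden-set condition, and I expect it to be the most delicate bookkeeping step.
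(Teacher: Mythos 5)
Your plan follows essentially the same route as the paper's proof: conditions (I) and (II) of the generalised adjustment criterion come from the COSO edge $Q \astrightarrow W$ and the exclusion of $Forb(W,Y,\mathcal{G})$, condition (III) is reduced to generalised back-door paths via Lemma~\ref{lemma}, and open back-door paths are converted into $m$-connecting $Q$--$Y$ paths (and back) by exploiting $W$ as a conditioned collider at the junction with $Q \astrightarrow W$. Your ``conversely'' direction (not an adjustment set $\Rightarrow$ dependence) is the paper's sufficiency argument, and your added care --- non-adjacency of $Q$ and $Y$ via acyclicity, the case where $W$ is interior to the connecting path, arguing in a MAG representative --- is sound; the one case you omit there is that $Q$ may itself lie on the open back-door path $\pi$, which is repaired by passing to the $Q$-to-$Y$ subpath of $\pi$.

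However, the step you flag as ``delicate bookkeeping'' in the direction ``$\mathbf{Z}$ adjustment set $\Rightarrow Q\Vbar Y\mid\mathbf{Z}\cup\{W\}$'' is not closable: when $Q\leftrightarrow W$ and the $Q$--$Y$ connecting path avoids $W$ and arrives at $Q$ with an arrowhead, there need be no descendant of $Q$ in $\mathbf{Z}$ and no alternative path to select. Concretely, take the MAG with edges $X_1\to Q$, $Q\leftrightarrow W$, $W\to Y$, $X_1\to Y$. Here $Q$ is a COSO variable, $W\to Y$ is visible (witnessed by $Q$, which is not adjacent to $Y$), and $\mathbf{Z}=\emptyset$ satisfies the generalised adjustment criterion, since the only non-causal path $W\leftrightarrow Q\leftarrow X_1\to Y$ is blocked by the collider $Q$ (and $De(Q)=\{Q\}$); yet $Q\not\Vbar Y\mid W$ because $Q\leftarrow X_1\to Y$ is $m$-connecting. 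So the ``only if'' half of the equivalence is false for spouse-type COSO variables, and your two-way correspondence is genuinely one-way there. You have in fact located the precise hole in the paper's own proof: its necessity argument asserts that all $Q$--$Y$ paths either have the form $Q\astrightarrow W\to\dots\to Y$ or extend a blocked back-door path through the collider $W$, silently ignoring $Q$--$Y$ paths that bypass $W$. If the hypothesis is strengthened to $Q\in Pa(W)$, your plan does go through: prepending $Q\to W$ leaves $Q$ a definite non-collider with $Q\notin\mathbf{Z}$, and your third worry (a collider kept open solely because $W$ is its descendant) resolves by rerouting along the directed path from that collider into $W$ --- any member of $\mathbf{Z}$ on that directed path would already open the collider under $\mathbf{Z}$ alone. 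The direction the algorithms actually use (independence $\Rightarrow$ adjustment set) survives in both your write-up and the paper's, modulo the shared reliance on every generalised back-door path being into $W$, which you rightly note requires the visibility witness $Q\not\sim A$ rather than following automatically.
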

\begin{proof}
As $Q$ is a COSO variable, $\mathcal{G}$ must contain an edge $Q\astrightarrow W$, i.e. there is an edge from $Q$ into $W$. As there is at least a causal path from $W$ to $Y$, there must be an edge out of $W$. Then by Definition~\ref{Visibility}, when there is an edge out of $W$, the edge must be visible. Thus, the PAG $\mathcal{G}$ is adjustment amenable w.r.t. $(W, Y)$ by Definition~\ref{Amenability}, i.e. the condition (I) of the generalised adjustment criterion (Definition~\ref{def:GAC}) holds.

The condition $\mathbf{Z} \subseteq\mathbf{X}\setminus (\{Q\} \cup Forb(W, Y, \mathcal{G}))$ is to remove all possible descendants of $W$ in $\mathcal{G}$. Hence, the set $\mathbf{Z}$ satisfies the condition (II) of the generalised adjustment criterion (Definition~\ref{def:GAC}).

Now we first prove the sufficient condition, i.e. if $Q\Vbar Y|\mathbf{Z}\cup\{W\}$, then $\mathbf{Z}$ blocks all definite status non-causal paths, i.e. condition (III) of the generalised adjustment criterion (Definition~\ref{def:GAC}) is satisfied, and thus $\mathbf{Z}$ is an adjustment set given that conditions (I) and (II) of the generalised adjustment criterion have been proved above. We prove this using contradiction. Suppose that a generalised back-door path $\pi$ from $W$ to $Y$ is not blocked by $\mathbf{Z}$. As $\pi$ must be in the form of $W\astleftarrow \dots \astrightarrow Y$ according to Definition~\ref{generalised  back door path}. Moreover, the edge $Q\astrightarrow W$ and $\pi$ will collide at $W$ (i.e. $W$ is a collider), so $Q$ is m-connected to $Y$ given $\mathbf{Z}\cup \{W\}$. This violates $Q\Vbar Y|\mathbf{Z}\cup \{W\}$. Therefore, all generalised back-door paths are blocked by $\mathbf{Z}$, and thus condition (III) of the generalised adjustment criterion holds according to Lemma~\ref{lemma}.

Next, we prove the necessary condition, i.e. if $\mathbf{Z} \subseteq\mathbf{X}\setminus (\{Q\} \cup Forb(W, Y, \mathcal{G}))$ is an adjustment set, then $Q\Vbar Y|\mathbf{Z}\cup \{W\}$. When $\mathbf{Z}$ is an adjustment set, then $\mathbf{Z}$ blocks all definite status non-causal paths from $W$ to $Y$ by condition (III) of the generalised adjustment criterion (Definition~\ref{def:GAC}). Hence, by Lemma~\ref{lemma}, $\mathbf{Z}$ blocks all generalised back-door paths between $W$ and $Y$. Moreover, the paths from $Q$ to $Y$ which are form of $Q \astrightarrow W$ and the causal paths between $W$ and $Y$ (as they must be in the form of $Q \astrightarrow W \rightarrow \dots \rightarrow Y$) are blocked by $W$. Thus, $Q$ and $Y$ are m-separated given $\mathbf{Z}\cup \{W\}$ in PAG $\mathcal{G}$, i.e. $Q\Vbar Y|\mathbf{Z}\cup \{W\}$.
\end{proof}

We use Fig.~\ref{fig:EquivalencePAG} as an example to illustrate Theorem~\ref{theo:theo01}.
Assume that the learned PAG is PAG$_2$ in Fig.~\ref{fig:EquivalencePAG} (which is inconsistent with the underlying MAG). $Forb(W, Y, \mathcal{G}) = \emptyset$ and hence $\{V_1, V_2, V_3\}$ and its subsets are candidate adjustment sets. The GAC reads adjustment sets from PAG$_2$, and obtains the adjustment sets $\{V_1\}$, $\{V_2\}$ or $\{V_1, V_2\}$ since $V_3$ is a collider in PAG$_2$. Theorem~\ref{theo:theo01} uses PAG$_2$ only for finding $Forb(W, Y, \mathcal{G})$, and uses the criterion, $V_1 \Vbar Y | \mathbf{Z} \cup \{W\}$, for identifying adjustment sets where $V_1$ is the COSO. Based on the criterion, $\{V_2, V_3\}$ is identified as the adjustment set and this adjustment is correct for the underlying MAG. Theorem~\ref{theo:theo01} does not make the same mistake as the GAC when a PAG is misspecified. Assume that the underlying MAG is consistent with PAG$_2$. The GAC identifies the adjustment sets correctly. Theorem~\ref{theo:theo01} will identify $\{V_2\}$ as the adjustment set and this adjustment set is also correct.

To reduce the search space for $\mathbf{Z}$ by Theorem~\ref{theo:theo01}, we develop the following theorem.

\begin{theorem}
    \label{theo:theo02}
    Given a PAG $\mathcal{G}$ containing a pair of variables $(W, Y)$ and the other variables $\mathbf{X}$, and assume that there exists a COSO variable $Q\in\mathbf{X}$. There exists a set of variables $\mathbf{Z} \subseteq pds(W, Y, \mathcal{G})\setminus (\{Q\} \cup Forb(W, Y, \mathcal{G}))$, then $\mathbf{Z}$ is an adjustment set if and only if $Q\Vbar Y|\mathbf{Z}\cup \{W\}$.
\end{theorem}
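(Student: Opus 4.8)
The plan is to derive Theorem~\ref{theo:theo02} from Theorem~\ref{theo:theo01} together with a completeness argument for the possible d-separation set. First I would observe that the candidate pool only shrinks: since $pds(W,Y,\mathcal{G})$ excludes $W$ and $Y$ lies in $Forb(W,Y,\mathcal{G})$, we have $pds(W,Y,\mathcal{G})\setminus(\{Q\}\cup Forb(W,Y,\mathcal{G}))\subseteq \mathbf{X}\setminus(\{Q\}\cup Forb(W,Y,\mathcal{G}))$. Hence, for any fixed $\mathbf{Z}$ drawn from the smaller pool, both directions of the biconditional are immediate instances of Theorem~\ref{theo:theo01}: amenability (from the COSO edge $Q\astrightarrow W$ and the causal path out of $W$) and the forbidden-set condition are inherited verbatim, and so is the equivalence between $Q\Vbar Y\mid\mathbf{Z}\cup\{W\}$ and $\mathbf{Z}$ being an adjustment set. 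Read pointwise over $\mathbf{Z}$, the statement is therefore a specialisation of Theorem~\ref{theo:theo01}.

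The substantive content --- what makes this a genuine reduction of the search space rather than a mere restriction --- is that confining the search to $pds(W,Y,\mathcal{G})$ discards no adjustment set. I would isolate this as a completeness lemma: whenever some $\mathbf{Z}\subseteq\mathbf{X}\setminus(\{Q\}\cup Forb(W,Y,\mathcal{G}))$ satisfies $Q\Vbar Y\mid\mathbf{Z}\cup\{W\}$, its restriction $\mathbf{Z}'=\mathbf{Z}\cap pds(W,Y,\mathcal{G})$ again satisfies $Q\Vbar Y\mid\mathbf{Z}'\cup\{W\}$. Applying Theorem~\ref{theo:theo01} to both $\mathbf{Z}$ and $\mathbf{Z}'$ then yields that an adjustment set exists inside $pds(W,Y,\mathcal{G})\setminus(\{Q\}\cup Forb(W,Y,\mathcal{G}))$ exactly when one exists inside $\mathbf{X}\setminus(\{Q\}\cup Forb(W,Y,\mathcal{G}))$, which is the practically relevant guarantee.

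To prove the completeness lemma I would first reuse the reduction inside the proof of Theorem~\ref{theo:theo01}: because of the COSO edge $Q\astrightarrow W$, conditioning on $W$ makes it a collider, so $Q\Vbar Y\mid\mathbf{S}\cup\{W\}$ holds precisely when $\mathbf{S}$ blocks every generalised back-door path from $W$ to $Y$ (via Lemma~\ref{lemma}). The goal thus reduces to showing that if $\mathbf{Z}$ blocks all such paths then so does $\mathbf{Z}\cap pds(W,Y,\mathcal{G})$, i.e. that deleting the nodes of $\mathbf{Z}$ outside $pds(W,Y,\mathcal{G})$ cannot reopen a back-door path. This is exactly the possible-d-separation property underlying Definition~\ref{def:pds} \cite{colombo2012learning}: taking a shortest hypothetically reopened path and inspecting its consecutive triples, every node that can act as a relevant non-collider blocker on a back-door path out of $W$ is forced to be a collider or a member of a triangle on a path to $W$, and hence to lie in $pds(W,Y,\mathcal{G})$. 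The hard part will be the bookkeeping in this step --- tracking collider activation through descendants, respecting the mandatory conditioning on $W$, and invoking maximality of the MAG to rule out the surviving cases --- whereas the amenability, forbidden-set, and conditional-independence ingredients all come for free from Theorem~\ref{theo:theo01} and Lemma~\ref{lemma}.
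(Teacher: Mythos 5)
Your opening reduction is correct and is in fact all the paper's proof does: since $pds(W,Y,\mathcal{G})\setminus(\{Q\}\cup Forb(W,Y,\mathcal{G}))\subseteq \mathbf{X}\setminus(\{Q\}\cup Forb(W,Y,\mathcal{G}))$, the stated biconditional for any fixed $\mathbf{Z}$ in the restricted pool is a pointwise specialisation of Theorem~\ref{theo:theo01}; the paper then simply cites Malinsky and Spirtes for the adequacy of $pds(W,Y,\mathcal{G})$ as a search space and stops.

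The genuine gap is in your ``completeness lemma'': the claim that if $\mathbf{Z}\subseteq\mathbf{X}\setminus(\{Q\}\cup Forb(W,Y,\mathcal{G}))$ satisfies $Q\Vbar Y\mid\mathbf{Z}\cup\{W\}$ then so does $\mathbf{Z}'=\mathbf{Z}\cap pds(W,Y,\mathcal{G})$ is false, so the ``bookkeeping'' you deferred cannot be completed. Counterexample (a MAG, indeed a DAG): $Q\rightarrow W$, $A\rightarrow W$, $B\rightarrow A$, $B\rightarrow Y$, $W\rightarrow C$, $C\rightarrow Y$. Here $Q$ is a COSO variable, $Forb(W,Y,\mathcal{G})=\{C,Y\}$, and $\mathbf{Z}=\{B\}$ is a valid adjustment set with $Q\Vbar Y\mid\{B,W\}$, since $B$ blocks the only generalised back-door path $W\leftarrow A\leftarrow B\rightarrow Y$. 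But $B\notin pds(W,Y,\mathcal{G})$: on $B\rightarrow A\rightarrow W$ the node $A$ is a non-collider and $\langle B,A,W\rangle$ is not a triangle ($B,W$ non-adjacent), while on $B\rightarrow Y\leftarrow C\leftarrow W$ the node $C$ is a non-collider on $\langle Y,C,W\rangle$ and $Y,W$ are non-adjacent. Hence $\mathbf{Z}'=\emptyset$, and $Q$ and $Y$ are m-connected given $\{W\}$ via $Q\rightarrow W\leftarrow A\leftarrow B\rightarrow Y$ (the collider $W$ is conditioned on; the non-colliders $A,B$ are not). The same example refutes the premise your triple-by-triple argument is meant to establish: $B$ lies on a generalised back-door path yet is outside $pds(W,Y,\mathcal{G})$, so it is not true that every relevant non-collider blocker belongs to $pds(W,Y,\mathcal{G})$ (note the paper's own one-line gloss of the cited result overstates it in the same way, but the paper uses only its existential consequence). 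The correct supporting statement, and what the possible-d-separation results of Colombo et al.\ and Malinsky--Spirtes actually deliver, is existential: if some set blocks all generalised back-door paths, then some subset of $pds(W,Y,\mathcal{G})$ does (here $\{A\}$), obtained via a canonical D-SEP-type construction rather than by intersecting an arbitrary valid set with $pds(W,Y,\mathcal{G})$. Your proof of the theorem as literally stated survives on your first paragraph alone; the search-space guarantee must be argued existentially, as the paper does by citation.
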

\begin{proof}
$pds(W, Y, \mathcal{G})$ is a subset of $\mathbf{X}$ and from~\cite{malinsky2017estimating}, $pds(W, Y, \mathcal{G})$ contains all nodes which lie on the generalised back-door paths between $W$ and $Y$. Hence, the search for a valid adjustment set can be done only in $pds(W, Y, \mathcal{G})\setminus (\{Q\}\cup Forb(W, Y, \mathcal{G}))$ instead of in $\mathbf{X} \setminus (\{Q\}\cup Forb(W, Y, \mathcal{G}))$. Then the proof follows from the proof of Theorem~\ref{theo:theo01}.
\end{proof}

\begin{figure}
\centering
    \includegraphics[width=4.5cm,height=1.8cm]{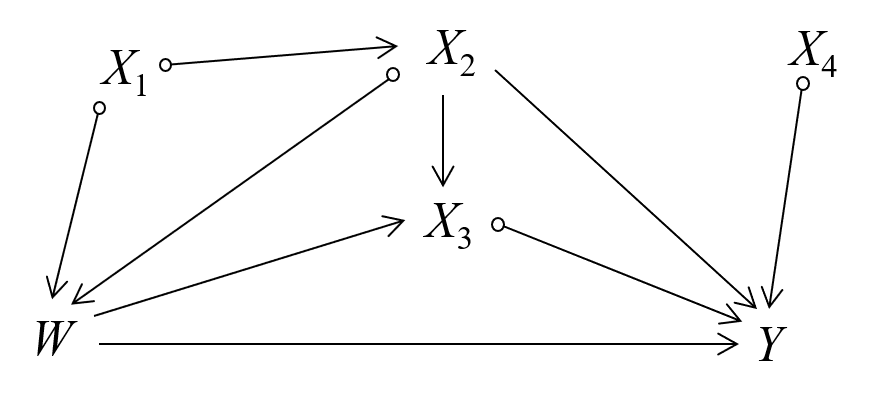}
    \caption{An example PAG, used in Example~\ref{Example:01}.}
\label{fig:pag}
\end{figure}

The following example illustrates Theorems~\ref{theo:theo01} and~\ref{theo:theo02}.

\begin{example}
    \label{Example:01}
Refer to the PAG in Fig.~\ref{fig:pag}, denoted as $\mathcal{G}$, which contains a pair of variables $(W, Y)$ and has $X_1$ as a COSO variable. From $\mathcal{G}$, $Forb(W, Y, \mathcal{G})=\{X_3, Y\}$ and $pds(W, Y, \mathcal{G})=\{X_1, X_2, X_3, Y\}$.

Considering Theorem~\ref{theo:theo01}, as $\mathbf{X}\setminus (\{Q\} \cup Forb(W, Y, \mathcal{G})) = \{X_2, X_4\}$, and $\mathbf{Z}$ can be a subset of $\{X_2, X_4\}$, i.e. $\{X_2\}$, $\{X_4\}$ or $\{X_2, X_4\}$. From $\mathcal{G}$, $X_1\Vbar Y|\{X_2,W\}$ and $X_1\Vbar Y|\{X_2, X_4, W\}$, so following Theorem~\ref{theo:theo01}, $\{X_2\}$ and $\{X_2, X_4\}$ are adjustment sets for unbiased estimation of $ACE(W,Y)$. According to $\mathcal{G}$, we actually can see that $X_2$ blocks all the four generalised back-door paths from $W$ to $Y$: $W \circleleftarrow X_1 \circlerightarrow X_2 \rightarrow Y$; $W\circleleftarrow X_1 \circlerightarrow X_2\rightarrow X_3 \circlerightarrow Y$; $W\circleleftarrow X_2\rightarrow X_3 \circlerightarrow Y$ and $W\circleleftarrow X_2 \rightarrow Y$.

Considering Theorem~\ref{theo:theo02}, as $pds(W, Y, \mathcal{G})\setminus (\{Q\}\cup Forb(W, Y, \mathcal{G})) = \{X_2\}$, the search space for $\mathbf{Z}$ reduces to $\{X_2\}$, instead of $\{X_2, X_4\}$ when following Theorem~\ref{theo:theo01}, and we obtain ${X_2}$ as an adjustment set by Theorem~\ref{theo:theo02}.
\end{example}

\begin{theorem}
  \label{theo:theo03}
Given a PAG $\mathcal{G}$ containing a pair of variables $(W, Y)$ and the other variables $\mathbf{X}$, and assume that there exists a COSO variable $Q\in\mathbf{X}$. Let $\mathcal{Z}$ be the set of all adjustment sets found following Theorem~\ref{theo:theo01} or Theorem~\ref{theo:theo02}, the estimated value of $ACE(W,Y)$ by adjusting $\mathbf{Z}$ is the same, i.e. unique $\forall \mathbf{Z}\in\mathcal{Z}$.
\end{theorem}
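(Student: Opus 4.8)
The plan is to reduce the claim to the \emph{soundness} of the generalised adjustment criterion (Definition~\ref{def:GAC}), which was established by Perkovi\'{c} et al.~\cite{perkovic2017complete}. The essential observation is that every member of $\mathcal{Z}$ is not merely \emph{some} valid adjustment set, but one that has been certified to satisfy all three conditions of the generalised adjustment criterion relative to $(W, Y)$ in $\mathcal{G}$; and any set satisfying that criterion recovers the same interventional quantity through the adjustment formula~\eqref{eq:eq01}.

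First I would recall what the proofs of Theorems~\ref{theo:theo01} and~\ref{theo:theo02} have already delivered for an arbitrary $\mathbf{Z}\in\mathcal{Z}$. The COSO assumption forces an edge $Q\astrightarrow W$ together with a visible edge out of $W$, so $\mathcal{G}$ is adjustment amenable (condition (I)); the constraint $\mathbf{Z}\subseteq\mathbf{X}\setminus(\{Q\}\cup Forb(W, Y, \mathcal{G}))$ gives $\mathbf{Z}\cap Forb(W, Y, \mathcal{G})=\emptyset$ (condition (II)); and the test $Q\Vbar Y\mid\mathbf{Z}\cup\{W\}$, via Lemma~\ref{lemma}, guarantees that $\mathbf{Z}$ blocks every definite status non-causal path between $W$ and $Y$ (condition (III)). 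Hence every $\mathbf{Z}\in\mathcal{Z}$ satisfies the generalised adjustment criterion relative to $(W, Y)$.

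Second I would invoke the soundness of the generalised adjustment criterion~\cite{perkovic2017complete}: whenever $\mathbf{Z}$ satisfies this criterion relative to $(W, Y)$, the adjustment formula~\eqref{eq:eq01} returns exactly the post-intervention contrast $\mathbf{E}(Y\mid do~(W=1))-\mathbf{E}(Y\mid do~(W=0))$, which is $ACE(W, Y)$ by Definition~\ref{def:ATC}. This quantity is fixed by the underlying distribution and causal structure and makes no reference to $\mathbf{Z}$. Consequently, for any two members $\mathbf{Z}_1, \mathbf{Z}_2\in\mathcal{Z}$, the two applications of~\eqref{eq:eq01} both equal this single value, so the value obtained by adjusting $\mathbf{Z}$ is the same for all $\mathbf{Z}\in\mathcal{Z}$, and uniqueness follows.

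The step carrying the real weight is the first one: confirming that the COSO-based test in Theorems~\ref{theo:theo01} and~\ref{theo:theo02} certifies \emph{full} membership in the generalised adjustment criterion, rather than merely the blocking of generalised back-door paths. Once that is in hand, uniqueness is immediate and the result is essentially a corollary of the soundness theorem. I would also flag one interpretive point: the statement asserts uniqueness of the \emph{identified}, population-level value computed by~\eqref{eq:eq01}, not of finite-sample estimates, which may still differ across adjustment sets owing to sampling variability and the differing conditioning sets; the claim should therefore be read at the level of the true interventional distribution.
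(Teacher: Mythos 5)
Your proposal is correct and follows essentially the same route as the paper's own proof, which likewise argues that every $\mathbf{Z}\in\mathcal{Z}$ is a valid (proper) adjustment set, certified by the test $Q\Vbar Y\mid \mathbf{Z}\cup\{W\}$ via Theorems~\ref{theo:theo01} and~\ref{theo:theo02}, so each adjustment identifies the one true $ACE(W,Y)$ and all values must coincide; you merely make explicit the appeal to the soundness of the generalised adjustment criterion of Perkovi\'{c} et al.\ that the paper leaves implicit. Your closing caveat that uniqueness holds at the population level rather than for finite-sample estimates is also consistent with the paper, which notes immediately after the theorem that sample estimates carry variance and motivates averaging over multiple minimal adjustment sets.
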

\begin{proof}
Any adjustment set $\mathbf{Z}$ found by following Theorem~\ref{theo:theo01} or Theorem~\ref{theo:theo02} is a proper adjustment set for the underlying MAG of the given PAG $\mathcal{G}$ since $Q\Vbar Y \mid \{W\}\cup \mathbf{Z}$ holds. Hence, the causal effect estimated by adjusting any $\mathbf{Z}$ is unbiased. In other words, the estimated causal effect is always the same when adjusting any $\mathbf{Z}$ found with Theorem~\ref{theo:theo01} or Theorem~\ref{theo:theo02}.
\end{proof}

Theorem~\ref{theo:theo03} shows that the estimated causal effect is unique when adjusting on any of the adjustment sets found with Theorem~\ref{theo:theo01} or Theorem~\ref{theo:theo02}. This is a significant advantage of our proposed method over most existing data-driven methods which listed in Table~\ref{tab_methodClassificaiton}, such as LV-IDA \cite{malinsky2017estimating}, which provides a bound or a multiset of estimated causal effects, and GAC~\cite{perkovic2017complete} read off an adjustment set from the learned PAG which suffers from the errors of the edge orientations. The invalid adjustment set is identified because $W\Vbar Y\mid \mathbf{Z}$ is not testable from data as $W$ is a cause of $Y$. It is also worth noting again that Theorems~\ref{theo:theo01} and~\ref{theo:theo02} only assume the existence of a COSO variable in the data, which is more practical and less restricted than the pretreatment variable assumption.

Based on Theorem~\ref{theo:theo03}, adjusting any $\mathbf{Z}$ satisfying Theorem~\ref{theo:theo01} or Theorem~\ref{theo:theo02} will give an unbiased causal effect estimation. In a dataset, there will be variance associated with each estimation. In our following algorithms, we will employ multiple estimations based on all minimal adjustment sets for estimation to reduce variance.

\subsection{The proposed algorithms}
\label{subsec:dase}
In this section, based on Theorem~\ref{theo:theo02} and Theorem~\ref{theo:theo03}, we propose two Data-driven Adjustment Variable Selection algorithms, DAVS-$Q$ (Algorithm~\ref{pseudocode01}) and DAVS (Algorithm~\ref{pseudocode02}) for unique and unbiased causal effect estimation from data with hidden variables. As mentioned above, it is desirable to use multiple adjustment sets for robustness. Thus, \emph{Apriori pruning strategy}~\cite{han2011data} is used to search for all minimal adjustment sets from data. The correctness of the search strategy is guaranteed by the principle that any superset of a minimal adjustment set must not be a minimal adjustment set.

Both algorithms take as input a dataset and a PAG can be learned from the dataset using rFCI~\cite{colombo2012learning} in this work. DAVS-$Q$ requires a given COSO variable, while DAVS finds COSO variables from data. In many applications, a COSO variable is known based on domain knowledge, then DAVS-$Q$ is the better choice. When a COSO variable is unknown, users can employ DAVS, which finds the set of candidate COSO variables, $\mathbf{Q}$. For each variable $Q\in \mathbf{Q}$, DAVS calls DAVS-$Q$ to estimate $ACE(W, Y)$.

So the two algorithms take the same search strategy, \emph{Apriori pruning strategy}~\cite{han2011data}, and estimate causal effect using a found adjustment set (see Lines 11-22 of Algorithm~\ref{pseudocode01}). Furthermore, the algorithms use a bottom-up search (from single variables to multiple variables) for efficient utilisation of data in conditional independence tests. Note that an empty set is also a legitimate adjustment set if it satisfies Theorem~\ref{theo:theo02}. This situation is checked and dealt with in Line 3 of Algorithm~\ref{pseudocode01}. The pseudocode of \emph{Apriori pruning strategy} is provided in Algorithm~\ref{pse:Congen}, which is used in Line 21 of Algorithm~\ref{pseudocode01} to generate level $k$ candidate adjustment sets (stored in $C_k$). We use a \emph{hash table} to store candidates to save search time (See Line 2 in Algorithm~\ref{pse:Congen}). All indexes of the items are sorted in lexicon order for fast pruning. The first $k-2$ indexes of all sets in $\mathbf{C}_{k-1}$ are hashed for generating new candidates in $\mathbf{C}_{k}$ (See Line 2). Line 4 combines a pair of sets with the same first $k-2$ indexes to create a new candidate with $k$ indexes of variables. Lines 5 to 9 aim to prune candidate set $\mathbf{C}_{k}$ by checking all subsets of a candidate to ensure the candidate is potentially minimal.

In our implementation, the input PAG $\mathcal{G}$ is learned using rFCI~\cite{colombo2012learning} implemented in the $\mathbb{R}$ package \emph{pcalg}~\cite{kalisch2012causal}. The conditional independence tests are conducted by using \emph{GaussianCItest} and \emph{binCItest} in \emph{pcalg} for Gaussian and binary datasets respectively. The calculation of $ACE(W,Y)$ following Eq.(\ref{eq:eq01}) is done by \emph{lm} in $\mathbb{R}$ package \emph{stats}~\cite{maathuis2009estimating,malinsky2017estimating} and \emph{stdGlm} in $\mathbb{R}$ package \emph{stdReg}~\cite{sjolander2016regression} for continuous and binary outcome $Y$ respectively.

\begin{algorithm}[t]
\caption{Data-driven Adjustment Variable Selection given $Q$ (DAVS-$Q$)}
\label{pseudocode01}
\noindent {\textbf{Input}}: Dataset $\mathbf{D}$ with $W, Y$ and $\mathbf{X}$, $Q\in \mathbf{X}$, PAG $\mathcal{G}$ learned from $\mathbf{D}$.\\
\noindent {\textbf{Output}}: $\mathbf{\Psi}$.
\begin{algorithmic}[1]
\STATE{Let $\mathbf{\Psi} = NULL$}
\IF {$Q\Vbar Y|\mathbf{W}$}
\STATE{ Calculate $ACE(W, Y)$ via Eq.(\ref{eq:eq01}) given $\mathbf{Z} =\emptyset$ }
\ELSE
\IF{$\mathcal{G}$ is not given}
\STATE{Call the function rFCI to learn PAG $\mathcal{G}$ from $\mathbf{D}$.}
\ELSE
\STATE {Obtain $pds(W, Y, \mathcal{G})$ and $Forb(W, Y, \mathcal{G})$ from $\mathcal{G}$}
\STATE {$\mathbf{C}_1 = pds(W, Y, \mathcal{G})\setminus (\{Q\} \cup Forb(W, Y, \mathcal{G}))$}
\STATE {$k = 1$}
\WHILE{$\mathbf{C}_{k}\neq \emptyset$}
\FOR {each element $\mathbf{Z}\in \mathbf{C}_{k}$}
\IF{$\mathbf{Z}$ is not in $\mathbf{\Psi}$}
\IF{$Q\Vbar Y|\mathbf{Z}\cup \{W\}$}
\STATE{Calculate $ACE(W, Y)$ via Eq.(\ref{eq:eq01}) given $\mathbf{Z}$ and remove $\mathbf{Z}$ from $\mathbf{C}_{k}$}
\STATE{$\mathbf{\Psi}\leftarrow \cup(\mathbf{Z}, ACE(W, Y))$}
\ENDIF
\ENDIF
\ENDFOR
\STATE{$k = k+1$}
\STATE{/*Call the apriori pruning function in Algorithm~\ref{pse:Congen}*/ \\$\mathbf{C}_{k} = Candidate.gen(\mathbf{C}_{k-1}, k)$ }
\ENDWHILE
\ENDIF
\ENDIF
\STATE{Calculate the mean of all estimated $ACE(W, Y)$ in $\mathbf{\Psi}$, and stored the calculated value in $\mathbf{\Psi}$.}
\RETURN{$\mathbf{\Psi}$.}
\end{algorithmic}

\end{algorithm}
\begin{algorithm}[t]
\caption{Data-driven Adjustment Variable Selection without given $Q$ (DAVS)}
\label{pseudocode02}
\noindent {\textbf{Input}}: Dataset $\mathbf{D}$ with $W, Y$ and $\mathbf{X}$.\\
\noindent {\textbf{Output}}: $\mathbf{\Psi}$.
\begin{algorithmic}[1]
\STATE {Let $\mathbf{\Psi} =  NULL$}
\STATE{Call the function rFCI to learn PAG $\mathcal{G}$ from $\mathbf{D}$.}
\STATE {Find $Pa(W), Sp(W)$, $Pa(Y)$ and $Sp(Y)$ from $\mathcal{G}$}
\STATE {$\mathbf{Q}= (Pa(W)\cup Sp(W))\setminus (Pa(Y)\cup Sp(Y))$}
\FOR {each $Q\in \mathbf{Q}$}
\STATE {$\mathbf{\Psi}$ =  DAVS-$Q$ ($\mathbf{D}, Q, \mathcal{G}$)}
\ENDFOR
\RETURN{$\mathbf{\Psi}$}
\end{algorithmic}
\end{algorithm}

\begin{algorithm}[t]
\caption{Candidate adjustment sets generation ($Candidate.gen$)}
\label{pse:Congen}
\noindent {\textbf{Input}}: $\mathbf{C}_{k-1}$; $k$.\\
\noindent {\textbf{Output}}: $\mathbf{C}_k$.
\begin{algorithmic}[1]
\STATE {$\mathbf{C}_k = \emptyset$;}
\STATE {Hash the first $(k-2)$ variables of all sets in $\mathbf{C}_{k-1}$}
\FOR {each pair $C_i, C_j\subseteq {\mathbf{C}_{k-1}}$ with the same first $(k-2)$ variables.}
\STATE{$C_s = \cup({C_i}, {C_j})$ }
\STATE {/*Generating all subsets of $C_s$ of size $k-1$.*/ \\ $Csubsets=subset(C_s, k-1)$ }
\IF {every element of $Csubsets$ is in $\mathbf{C}_{k-1}$}
\STATE {Add $C_s$ to $\mathbf{C}_k$}
\ENDIF
\ENDFOR
\RETURN{$\mathbf{C}_k$}
\end{algorithmic}
\end{algorithm}

\textbf{Complexity Analysis}. The time complexity of DAVS (same for DAVS-$Q$) is largely determined by the time taken by the rFCI algorithm~\cite{colombo2012learning} for learning the PAG from data $\mathbf{D}$. In the worst case, the time complexity of rFCI is $\textbf{O}(2^{p}n)$ in which $p$ denotes the number of nodes and $n$ implies the number of samples~\cite{colombo2012learning,claassen2013learning}. The average time complexity of rFCI normally much lower than its worst time complexity. rFCI is feasible for large graph~\cite{colombo2012learning}. Line 8 DAVS-$Q$ (Algorithm~\ref{pseudocode01}) takes $\textbf{O}(p)$ since it read off the graph.
Lines 11-22 in DAVS-$Q$ (Algorithm~\ref{pseudocode01}) in the worst case, the time complexity is $\textbf{O}(2^{l})$ where $l$ denotes the size of $pds(W, Y, \mathcal{G})\setminus (\{Q\}\cup Forb(W, Y, \mathcal{G}))$, which is lower than the complexity of the rFCI algorithm~\cite{colombo2012learning}. The time complexity of Algorithm~\ref{pse:Congen} is $\textbf{O}(l)$. Hence, in the worst case, the time complexity of DAVS-$Q$ is $\textbf{O}(2^{p}n + p + 2^{l} + l)$ = $\textbf{O}(2^{p}n)$ where the first term is due to rFCI. The overall time complexity of DAVS-$Q$ is the same as that of rFCI. For DAVS, the complexity of DAVS relies on $|\mathbf{Q}|$. Therefore, the overall complexity of DAVS is $\textbf{O}(|\mathbf{Q}|2^{p}n)$. With DAVS, the size of $\mathbf{Q}$ is normally small. Therefore, the complexity of DAVS-$Q$ and DAVS are determined by the rFCI algorithm.

\section{Experiments}
\label{Sec:Exp}
The overall goal of the experiments is to evaluate the performance of the proposed DAVS-$Q$ and DAVS algorithms. We categorise the comparison methods into two groups. The first group includes IDA, LV-IDA, and GAC, which do not have the pretreatment assumption and work on all datasets. The second group includes the most known causal effect estimators, which have the pretreatment assumption. In the second group, apart from those discussed in the Introduction, we also include three statistical causal effect estimation methods which are rooted in the potential outcome model~\cite{rubin1996matching,rubin2005causal}, including Propensity score matching (PSM)~\cite{rosenbaum1983central}, covariate balancing propensity score (CBPS)~\cite{imai2014covariate} and causal forest (CF)~\cite{wager2018estimation}. All three methods assume a known covariate set.

More details about all the comparison methods will be provided in the Related Work section.

\textbf{Implementation}. For IDA, we use the implementation in the $\mathbb{R}$ package \emph{pcalg}~\cite{kalisch2012causal} and for LV-IDA we use its implementation at the \emph{Github} site\footnote{\url{https://github.com/dmalinsk/lv-ida}}. The GAC is conducted by using the $\mathbb{R}$ packages \emph{pcalg} and \emph{Dagitty}~\cite{textor2011dagitty}. DICE is implemented using the functions \emph{pc.select}, \emph{GaussianCItest} and \emph{binCItest} in the $\mathbb{R}$ package \emph{pcalg}. The implementation of EHS~\cite{entner2013data} is from the authors\footnote{\url{https://sites.google.com/site/dorisentner/publications/CovariateSelection}}, and the implementation of CFRNet is also by its authors\footnote{\url{https://github.com/clinicalml/cfrnet}}. For \emph{CovSel}~\cite{de2011covariate,haggstrom2018data}, specifically for the five covariate selection methods in \emph{CovSel}, namely All causes of $W$ as the adjustment variables ($\hat{\mathbf{X}}_{\rightarrow W}$), All causes of $Y$ ($\hat{\mathbf{X}}_{\rightarrow Y}$), All causes of both $W$ and $Y$ ($\hat{\mathbf{X}}_{\rightarrow W,Y}$), Causes of $W$ excluding those independent of $Y$ ($\hat{\mathbf{Z}}_{\rightarrow W}$) and causes of $Y$ excluding those independent of $W$ ($\hat{\mathbf{Z}}_{\rightarrow Y}$), we use the implementations in the $\mathbb{R}$ package \emph{CovSelHigh} by H\'{a}ggstr\"{o}m~\cite{haggstrom2018data}. The classic PSM~\cite{rosenbaum1983central} is implemented by the function \emph{glm} in the $\mathbb{R}$ package \emph{stats} and the function \emph{Matching} in $\mathbb{R}$ package \emph{Matching}~\cite{ho2007matching}. CBPS implementation is from the $\mathbb{R}$ package \emph{CBPS}. CF implementation is from the $\mathbb{R}$ package \emph{grf}~\cite{wager2018estimation}.

\textbf{Parameter settings}. The same parameter settings are used for DAVS-$Q$, DAVS, and LV-IDA, and the significance level ($\alpha$) is set to 0.05. Since IDA, LV-IDA, EHS, and DICE each return a multiset of causal effects, the average of the causal effects is considered as the most probable estimation of the causal effects. For EHS, we constraint the size of the conditional set to 6. Otherwise, it cannot produce a result within two hours for the real-world datasets used.

\textbf{Evaluation metrics}. For a dataset with the ground truth causal effect, we evaluate the performance of algorithms using the estimated bias (\%), which is the relative error w.r.t. the ground truth. For a dataset without the ground truth of causal effect, we evaluate the algorithms using domain experts' suggested range or the consistency with the results among algorithms.

\subsection{Experiments with synthetic data}
\label{subsec:synthetic}
We use 5 benchmark Bayesian networks (BNs) from the Bayesian Network Repository\footnote{\url{http://www.bnlearn.com/bnrepository/}}: CHILD, INSURANCE, MILDEW, ALARM, and BARLEY to generate synthetic datasets. The numbers of nodes and arcs of the BNs are summarized in Table ~\ref{tab:detailedbn}. For each BN, we choose a variable with multiple incoming edges as the outcome variable $Y$, and select one of $Y$'s parents as the treatment variable $W$. We generate 5 synthetic datasets from the 5 BNs with 10,000 samples each by using the $\mathbb{R}$ package \emph{bnlearn}~\cite{scutari2009learning}. Then we hide 5\% variables which are on back-door paths from $W$ to $Y$ from each synthetic dataset. For the convenience of reproducing the experiment results in this paper, we provide the information of $W$, $Y$, the hidden variables and the COSO variable $Q$ for DAVS-$Q$ on the five synthetic datasets in Table~\ref{tab:variales}.

\begin{table}[htbp]
\scriptsize
\caption{A SUMMARY OF THE five standard benchmark BNs.}
\label{tab:detailedbn}
\centering
\begin{tabular}{|c|c|c|c|c|c|}
\hline
  & CHILD & INSURANCE  & MILDEW & ALARM & BARLEY\\ \hline
Nodes  & 20 & 27& 35 & 37 & 48 \\ \hline
Arcs  & 25 & 52  & 46 & 46 & 84\\  \hline
\end{tabular}
\end{table}

\begin{table}[htbp]
\tiny
\caption{The variables $W$, $Y$ and the hidden variables on the five synthetic datasets.}
\label{tab:variales}
\centering
\begin{tabular}{|p{0.055\textwidth}|p{0.055\textwidth}|p{0.05\textwidth}|p{0.11\textwidth}|p{0.06\textwidth}|}
\hline
BNs & $W$ & $Y$ & Hidden variables  & $Q$\\  \hline
CHILD  & LowerBodyO2 &  CardiacMixing &  LungParench & Disease  \\ \hline
INSURANCE & RiskAversion & DrivHist &    Age and MakeModel & GoodStudent \\ \hline
MILDEW & meldug$\_$3   & meldug$\_$4 & temp$\_$2 and temp$\_$3     & middel$\_$2 \\ \hline
ALARM & VALV & CCHL & PMB and SHNT & VLNG   \\ \hline
BARLEY & dgv1059 & protein & saa1id, dgv5980 and ng1ilg & dg25    \\ \hline
\end{tabular}
\end{table}

As the pretreatment variable assumption does not hold in these datasets, only methods in the first group without assuming pretreatment variables, i.e. IDA, LV-IDA, and GAC, are used in the comparisons on the five datasets. The causal effects calculated using Eq.(\ref{eq:eq01}), and the adjustment set identified using the back-door criterion~\cite{pearl2009causality} applied to the structures (DAGs) of the BNs are the ground truth causal effects.

\begin{table*}[ht]
\small
\centering
\caption{Estimating $ACE$ on synthetic datasets. The methods in Group 2 are not compared since they need pretreatment assumption to be hold or known covariates.}
\begin{tabular}{|l|cc||cc||cc||cc||cc|}
  \hline
\multirow{2}{*}{Methods}&
    \multicolumn{2}{|c||}{CHILD}&\multicolumn{2}{c||}{INSURANCE}&\multicolumn{2}{c||}{MILDEW}
    &\multicolumn{2}{c||}{ALARM} &\multicolumn{2}{c|}{BARLEY}\cr
    \cline{2-11}
& ACE &Bias(\%) & $ACE$ & Bias(\%)&  $ACE$ & Bias(\%)& $ACE$ & Bias(\%)& $ACE$ & Bias(\%)\\ \hline
IDA & -0.435 & 75.29\%  & -0.498 & 75.96\% &  0.332  & 75.98\% & 0.347 & 72.33\% &  -0.277 & 73.81\% \\
LV-IDA & -0.435 &  75.29\% & -0.481  & 76.75\%  & 0.344  &75.16\%& 0.333   &  73.42\%
& -0.260 & 75.42\% \\
   GAC &  -0.435 & 75.15\%  & -0.481 & 76.75\% &  0.343  & 75.17\% & 0.285 & 77.23\% &  -0.246 & 76.76\% \\
DAVS-$Q$ & \textbf{-1.793} & \textbf{1.81}\% & \textbf{-2.097} & \textbf{2.21}\% & \textbf{1.391} & \textbf{3.11}\%& \textbf{1.370} &  \textbf{9.27}\% & -1.027 & 3.02\% \\
DAVS & \textbf{-1.793}  & \textbf{1.81}\%& -2.179 & 5.25\% & 1.435 & 3.78\% &  0.836  & 33.31\% &  \textbf{-1.053}& \textbf{0.55}\% \\  \hline
\end{tabular}
\label{tab:BNs}
\end{table*}

From the results in Table~\ref{tab:BNs}, DAVS-$Q$ and DAVS have significantly smaller biases than IDA, LV-IDA, and GAC.
Because of orientation errors, the PAGs learned from the put for LV-IDA and GAC are inconsistent with the underlying MAGs for data generation, and hence both read off the invalid adjustment set from the learned PAG and do not perform well. IDA misses hidden generalised back-door paths and hence produces biased estimation. In contrast, DAVS and DAVS-$Q$ are less affected by orientation errors and have achieved very good results. DAVS-$Q$ and DAVS perform the conditional independence test $Q\Vbar Y\mid \mathbf{Z}\cup \{W\}$ to identify the proper adjustment set $\mathbf{Z}$ directly from data. This is the main reason for DAVS-$Q$ and DAVS to outperform LV-IDA and GAC.


\subsection{Experiments on two semi-synthetic real-world datasets}
\label{subsec:realworld}
We evaluate the performance of DAVS on two semi-synthetic real-world datasets, IHDP~\cite{hill2011bayesian} and Twins~\cite{almond2005costs,louizos2017causal}. Because the underlying causal graphs of the datasets are unknown, and no domain knowledge is available for nominating the  COSO variable $Q$ needs to be given by domain experts or knowledge, so DAVS-$Q$ will not be evaluated on these datasets. Both datasets contain pretreatment variables only in addition to the treatment and outcome variables, and hence we can include all methods in Table~\ref{tab_methodClassificaiton} (excluding CE-SAT since its high complexity) in the comparison.

\subsubsection{IHDP}
\label{subsubsec:IHDP}
IHDP is the Infant Health and Development Program (IHDP) dataset collected from a randomized controlled experiment that investigated the effect of home visits by specialists on future cognitive test scores~\cite{hill2011bayesian}. There are 24 pretreatment variables and 747 infants, including 139 treated (having home visits by specialists) and 608 controls. The simulated outcome variables were generated by using the $\mathbb{R}$ package \emph{npci}~\cite{Dorie2016nonparmetrics} and the factual outcomes $Y$ was generated with the \emph{noiseless} outcome according to the same procedures suggested in Hill~\cite{hill2011bayesian}. The generated procedures allow obtaining the ground truth $ACE(W, Y)= 4.36$.

\begin{table}
\small
\centering
\caption{Estimating $ACE$ on IHDP and Twins. The lowest biases in each group of methods are highlighted. DAVS-$Q$ is not applicable since we do not know $Q$. }
\begin{tabular}{|l|cc||cc|}
  \hline
\multirow{2}{*}{Methods}&
    \multicolumn{2}{|c||}{IHDP}&\multicolumn{2}{c|}{Twins}\cr\cline{2-5}  &  $ACE$  & Bias(\%) &      $ACE$    &  Bias(\%)\\ \hline
 IDA                                 &  3.01  &  31.03\%  & -0.023    & 7.29\%   \\
 LV-IDA                               &  3.81 &  12.62\%  &   -0.021  & 14.71\%      \\
 GAC & 3.81 &  12.62\% &   -0.021  & 14.71\%\\
 DAVS                           & \textbf{4.00} & \textbf{8.25}\% & \textbf{-0.024}  & \textbf{3.46}\%  \\ \hline
 CFRNet                   &  4.15 &  4.82\%  & -0.020 & 20.25\%  \\
 DICE & \textbf{4.48} & \textbf{2.50}\%   & -0.038 & 51.38\% \\
 EHS                                 &  4.05 &  7.07\%   &   -0.033  & 24.27\% \\
$\hat{\mathbf{X}}_{\rightarrow W}$   &  3.59 &  17.66\%  &  -0.021  &14.70\%     \\
$\hat{\mathbf{Z}}_{\rightarrow W}$            &  3.69 &  15.37\%  &   -0.019  & 21.84\%    \\
$\hat{\mathbf{X}}_{\rightarrow Y}$   &  3.71 &  14.91\%  &   -0.013  & 46.72\%    \\
$\hat{\mathbf{Z}}_{\rightarrow Y}$            &  3.70 &  15.14\%  &   -0.012  & 50.65\%    \\
$\hat{\mathbf{X}}_{\rightarrow W, Y}$&  4.13 &  5.27\%   &   -0.017  & 31.63\%     \\
PSM                                  & 3.94  &  9.63\%   &  -0.016   & 35.83\% \\
CBPS &  4.07  & 6.54\%  & \textbf{-0.027} & \textbf{6.90}\%  \\
CF &3.50 &  19.64\% & -0.017 & 30.14\%\\
\hline
\end{tabular}
\label{tab:IHDPtwins}
\end{table}

\subsubsection{Twins}
\label{subsubsec:Twins}
Twins is a benchmark dataset used in causal inference literature. It is about twin births and deaths in the USA from 1989 -1991~\cite{almond2005costs}. We only choose same-sex twins with weights less than 2000g from the original dataset. Each twin-pair contains 40 pretreatment variables related to the parents, the pregnancy, and the birth of the twins~\cite{louizos2017causal}. We eliminate all records with missing values, so 4821 twin-pairs remain. For each pair, we observe both the treated ($W$=1, the heavier twin) and control ($W$=0, the lighter twin). The mortality after one year is the true outcome for each twin such that the ground truth $ACE(W, Y)$ is -0.02489. For simulating an observational study, we follow~\cite{louizos2017causal} to randomly hide one of the two twins, by using the setting: $W_i|x_i\sim Bern(sigmoid(\beta^{T}x+\varepsilon))$, where $\beta^{T}\sim\mathcal{U}((-0.1, 0.1)^{40\times1})$ and $\varepsilon\sim\mathcal{N}(0,0.1)$.

From the results in Table~\ref{tab:IHDPtwins}, we see that DAVS outperforms IDA, LV-IDA, and GAC in the first group. The second group of methods needs a stronger requirement (pretreatment variables only or known covariates), which is satisfied by the two datasets. With the Twins dataset, DAVS outperforms all other methods. With the IHDP dataset, DAVS outperforms most comparison methods. Considering that IHDP is a small dataset and DAVS needs many conditional independence tests, this performance is very impressive. When the size of a dataset increases (from IHDP to Twins), the performance of the methods in the first group improves significantly competitively with the performance of the methods of the second group.

\subsection{Experiments on two real-world datasets}
\label{subsec:tworealworld}
In this section, we further evaluate the performance of DAVS on two real-world datasets, i.e. Cattaneo2~\cite{cattaneo2010efficient} and RHC~\cite{connors1996effectiveness}. Since both datasets have no ground truth causal effects available, we judge the performance of DAVS according to whether it achieves a performance consistent with other methods and empirically estimated interval suggested in the literature.

\subsubsection{Cattaneo2}
\label{subsubsec:cat}
The Cattaneo2 dataset~\cite{cattaneo2010efficient} is frequently utilised to investigate the $ACE$ of mother's smoking status during pregnancy ($W$) on a baby's birth weight ($Y$ in grams)\footnote{\url{http://www.stata-press.com/data/r13/cattaneo2.dta}}.
The dataset contains birth weights of 4642 singleton births in Pennsylvania, USA~\cite{almond2005costs,cattaneo2010efficient}, with 864 smoking mothers ($W$=1) and 3778 non-smoking mothers ($W$=0). It contains 20 covariates and names a few of them, including mother's age, mother's marital status, mother's race, mother's education, father's education. The authors of~\cite{almond2005costs} have drawn a conclusion that there is a strong negative effect of maternal smoking on the weights of babies, about $200g$ to $250g$ lighter for a baby with a mother no-smoking during pregnancy (the empirically estimated interval).

The experimental results of all methods are listed in Table~\ref{tab:Catt_rhc} and visualized in Fig.~\ref{fig:cattaneo2}. We see that the estimated range of maternal smoking on babies' birth weight is $-275.25g$ to $-120.88g$. Furthermore, the estimated effects by IDA, DAVS, DICE, $\hat{\mathbf{X}}_{\rightarrow W}$ and $\hat{\mathbf{Z}}_{\rightarrow Y}$ fall in the empirically estimated interval ($-250g, -200g$) as shown in Fig.~\ref{fig:cattaneo2}, while the estimations by the other methods do not fall in the empirically estimated interval. DAVS has better performance than most of the state-of-the-art methods, and its estimate is consistent with the empirically estimated effect~\cite{almond2005costs}.

\begin{figure}
\includegraphics[width=8.9cm,height=4.4cm]{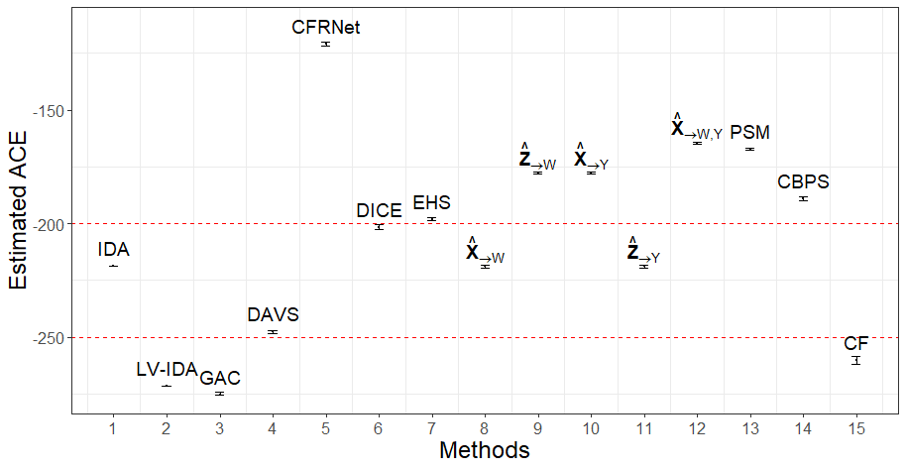}
\caption{The estimated causal effects by different methods in 95\% confident intervals for dataset Cattaneo2. The two dotted lines denote the empirical estimated interval $(-250g, -200g)$ suggested in the literature.}
\label{fig:cattaneo2}
\end{figure}

\subsubsection{Right Heart Catheterization}
\label{subsubsec:rhc}
The dataset Right Heart Catheterization (RHC) is from an observational investigation, which studies a diagnostic procedure for the management of critically ill patients~\cite{connors1996effectiveness}. The RHC dataset can be found in the $\mathbb{R}$ package \emph{Hmisc}\footnote{\url{https://CRAN.R-project.org/package=Hmisc}}. The dataset consists of the information of hospitalized adult patients from five medical centers in the USA. These hospitalized adult patients participated in the Study to Understand Prognoses and Preferences for Outcomes and Risks of Treatments (SUPPORT). Whether or not a patient received an RHC within 24 hours of admission is the treatment $W$. Whether a patient died at any time up to 180 days after admission is the outcome $Y$. The original data contains 5735 samples with 73 covariates. We pre-process the original data, as suggested by Loh et al.~\cite{loh2020confounder}, and obtain the dataset which contains 2707 samples with 72 covariates.

\begin{table}
\small
\caption{Estimating $ACE$ on the Cattaneo2 and RHC datasets. In Cattaneo2 dataset, five estimates highlighted are in the empirical estimated interval $(-250g, -200g)$ suggested in the literature. In RHC dataset, the overall range of all estimators is small and hence the estimated causal effects of all methods are considered consistent.  EHS did not return results in two hours in dataset RHC. DAVS-$Q$ is not applicable since we do not know $Q$.}
\hspace*{-0.2 cm}
\begin{tabular}{|l|cc||cc|}
  \hline
\multirow{2}{*}{Methods}&
    \multicolumn{2}{|c||}{Cattaneo2}&\multicolumn{2}{c|}{RHC}\cr\cline{2-5}  &  $ACE$  & 95\% C.I. &      $ACE$    &  95\% C.I.\\ \hline
 IDA  &  \textbf{-218.8} & $\diagup$ &0.0666 & $\diagup$  \\
 LV-IDA & -271.4 &  (-271.5, -271.3)  & 0.0400  & (0.0397, 0.0404)   \\
 GAC & -275.3&  (-275.5, -274.3) &  0.0400  &(0.0397, 0.0404)\\
 DAVS  & \textbf{-247.6} & (-248.3, -246.9) & 0.0395& (0.0394, 0.0396) \\ \hline
 CFRNet   &-120.9 & (-121.6, -120.1)&0.0129 &(0.0103, 0.0155) \\
 DICE & \textbf{-201.6} &(-202.7, -200.6)& 0.0290 & (0.0282, 0.0299) \\
 EHS  & -198.1&(-198.6, -197.5)&$\diagup$&$\diagup$ \\
$\hat{\mathbf{X}}_{\rightarrow W}$& \textbf{-218.8} &(-219.5, -218.2)&0.0407&(0.0399, 0.0415) \\
$\hat{\mathbf{Z}}_{\rightarrow W}$&-177.6&(-178.1, -177.1)&0.0156&(0.0150, 0.0163) \\
$\hat{\mathbf{X}}_{\rightarrow Y}$   &  -177.6 & (-178.1, -177.1)    & 0.0280 & (0.0273, 0.0286)   \\
$\hat{\mathbf{Z}}_{\rightarrow Y}$  &  \textbf{-218.8} & (-219.5, -218.2)  &  0.0191  & (0.0183, 0.0198)\\
$\hat{\mathbf{X}}_{\rightarrow W, Y}$& -164.7 & (-165.2, -164.2)& 0.0522  & (0.0515, 0.0529)    \\
PSM                         &-167.2  &  (-167.7, -166.7)   &  0.0614  & (0.0606, 0.0621) \\
CBPS &  -189.2  & (-190.1, -188.2) & 0.0209 & (0.0193, 0.0225)  \\
CF    & -260.1 & (-261.8, -258.4) & 0.0256 & (0.0243, 0.0269)\\
\hline
\end{tabular}
\label{tab:Catt_rhc}
\end{table}

The experimental results on RHC are listed in Table~\ref{tab:Catt_rhc} and visualized in Fig.~\ref{fig:RHC} (excluding EHS since it can not return results in two hours). The estimated $ACE$s by different methods range from 0.0129 to 0.0666, and the difference is very small, and thus the estimated causal effects of all methods are consistent. \\

\begin{figure}
\includegraphics[width=8.9cm,height=4.4cm]{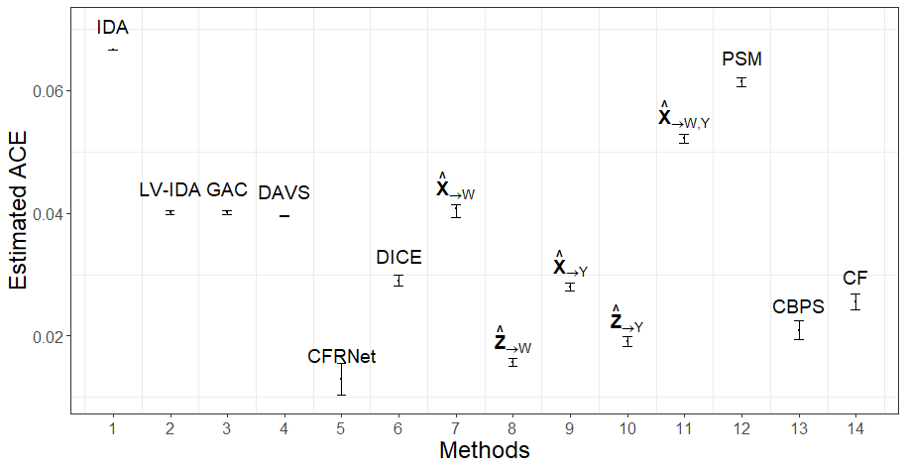}
\caption{The estimated causal effects by different methods in  95\% confident intervals for the RHC dataset. The maximal difference of all estimates is within 0.054 which is very small. The estimates are consistent. }
\label{fig:RHC}
\end{figure}

In sum, DAVS is consistently a top performer in all datasets tested and is the only method that achieves this. DAVS obtains the estimates which are approximate to the true or empirically estimated causal effects on both synthetic and real-world datasets. DAVS can be used in a range of datasets, with or without missing values, whether or not the pretreatment variable assumption is satisfied. 


\subsection{Efficiency evaluation}
\label{Subsec:Efficiency}
We evaluate the time efficiency of DAVS in comparison to the methods in the first group. We do not compare the methods in the second group since they require the pretreatment assumption or known covariates, and thus they do not work with all the datasets.

All computations were conducted on a PC with \emph{2.6GHz IntelCore i7} and \emph{16GB} of memory. The runtime of the algorithms on all datasets (synthetic and real-world) is shown in Fig.~\ref{fig:runtime}. All methods have similar time efficiency. EHS~\cite{entner2013data} is related to our method since it also uses a COSO variable, but EHS is extremely time-consuming and can only be run on datasets with few variables since it exhaustively searches for the adjustment sets. In the RHC dataset, EHS did not return results in two hours while DAVS returns results in 11 mins. Also, EHS is applicable to more restrictive datasets than DAVS.

\begin{figure}
\includegraphics[width=8.9cm,height=4.4cm]{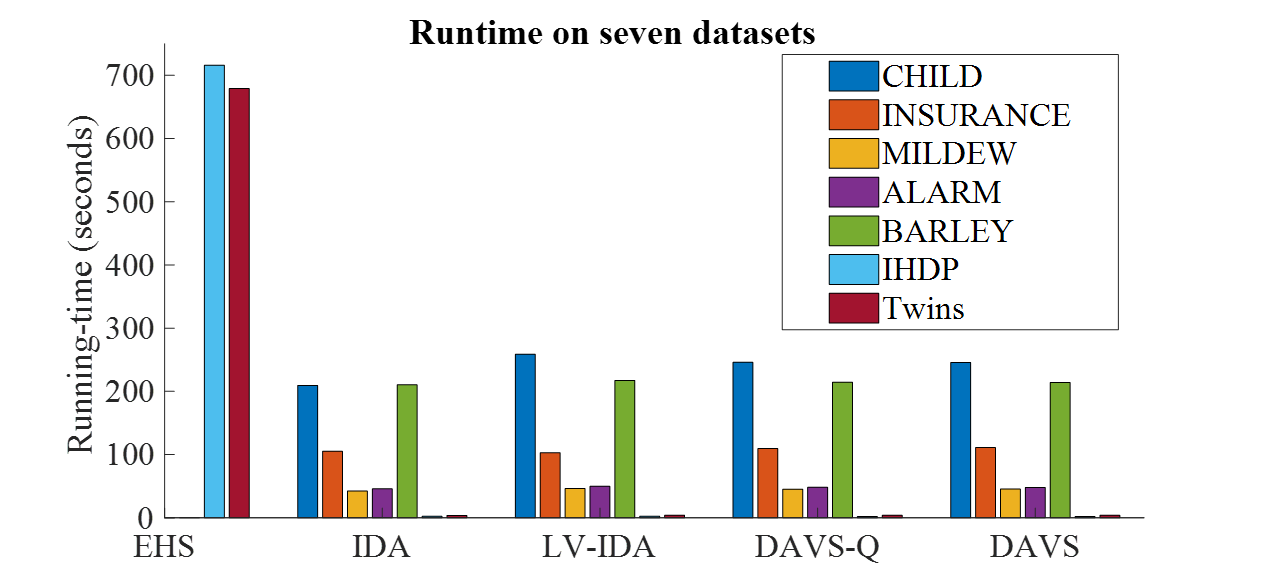}
\caption{Runtime in seconds of five methods on 9 datasets. Note that DAVS-$Q$ only works on five synthetic datasets. The time efficiency of all methods are similar.}
\label{fig:runtime}
\end{figure}

\subsection{Sensitive Analysis for different $Q$}
\label{Subsec:SenA}
To investigate the sensitivity of DAVS to the selection of different COSO variables $Q$, we analyze the estimated causal effects by DAVS on the five synthetic datasets described in Section~\ref{subsec:synthetic}. Since there exists only one single COSO variable in the causal structures of the ALARM and CHILD datasets, we exclude them from this analysis. We visualize the estimated causal effects under different COSO variables in Fig.~\ref{fig:COSO01} on the three remaining synthetic datasets, INSURANCE, MILDEW, and BARLEY. From Fig.~\ref{fig:COSO01}, we can see that the estimated causal effects are very close to the true causal effect under different COSO variables $Q$ on all of three synthetic datasets. In other words, the estimated causal effects are not sensitive to the COSO variable selection since in all cases, the effects estimated all approximate the true causal effect. 

\begin{figure*}
  \centering
\begin{minipage}[t]{0.3\linewidth}
\centering
\includegraphics[width=5.8cm,height=4cm]{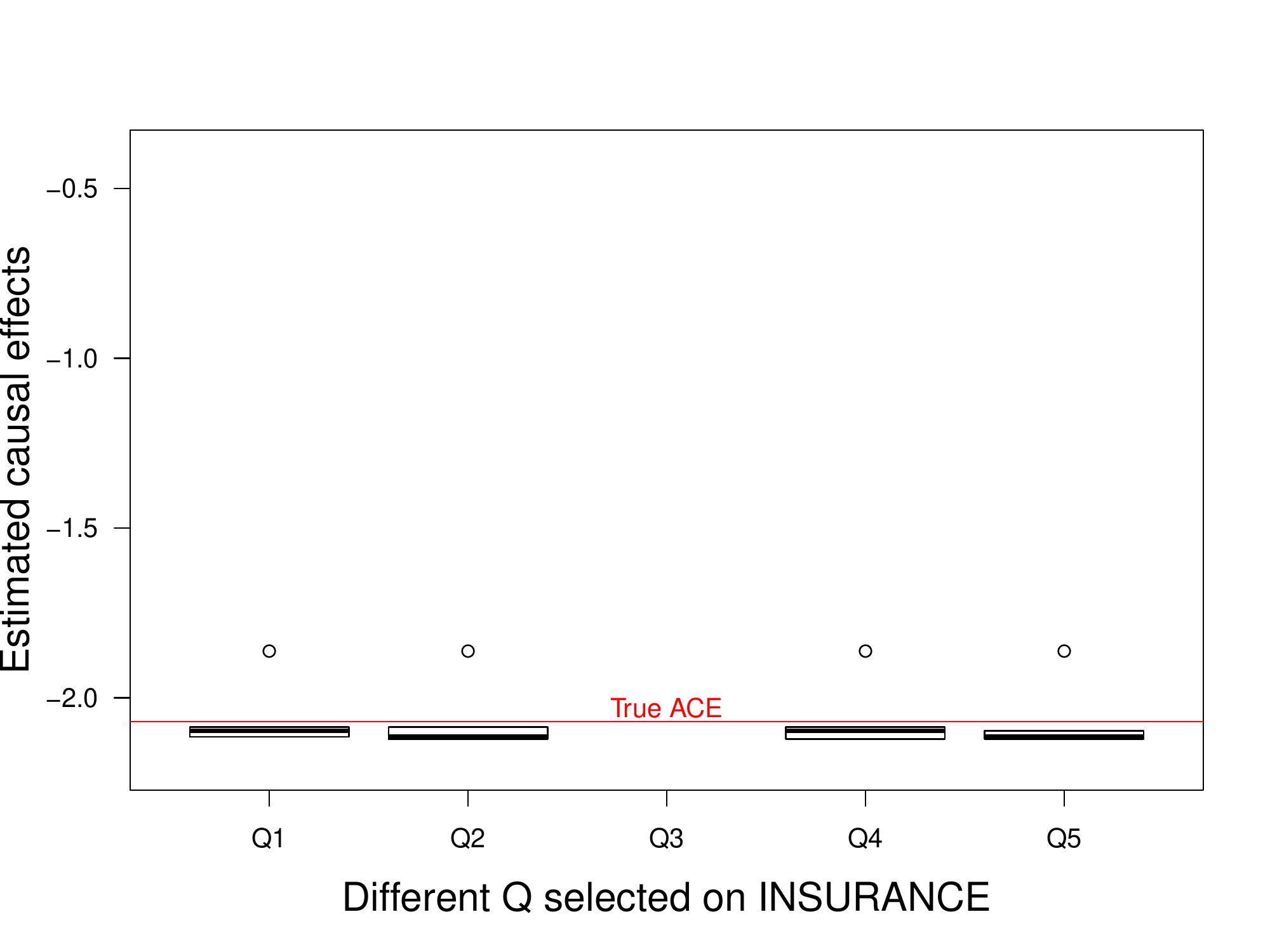}
\end{minipage}
\hfill
\begin{minipage}[t]{0.3\linewidth}
\centering
\includegraphics[width=5.8cm,height=4cm]{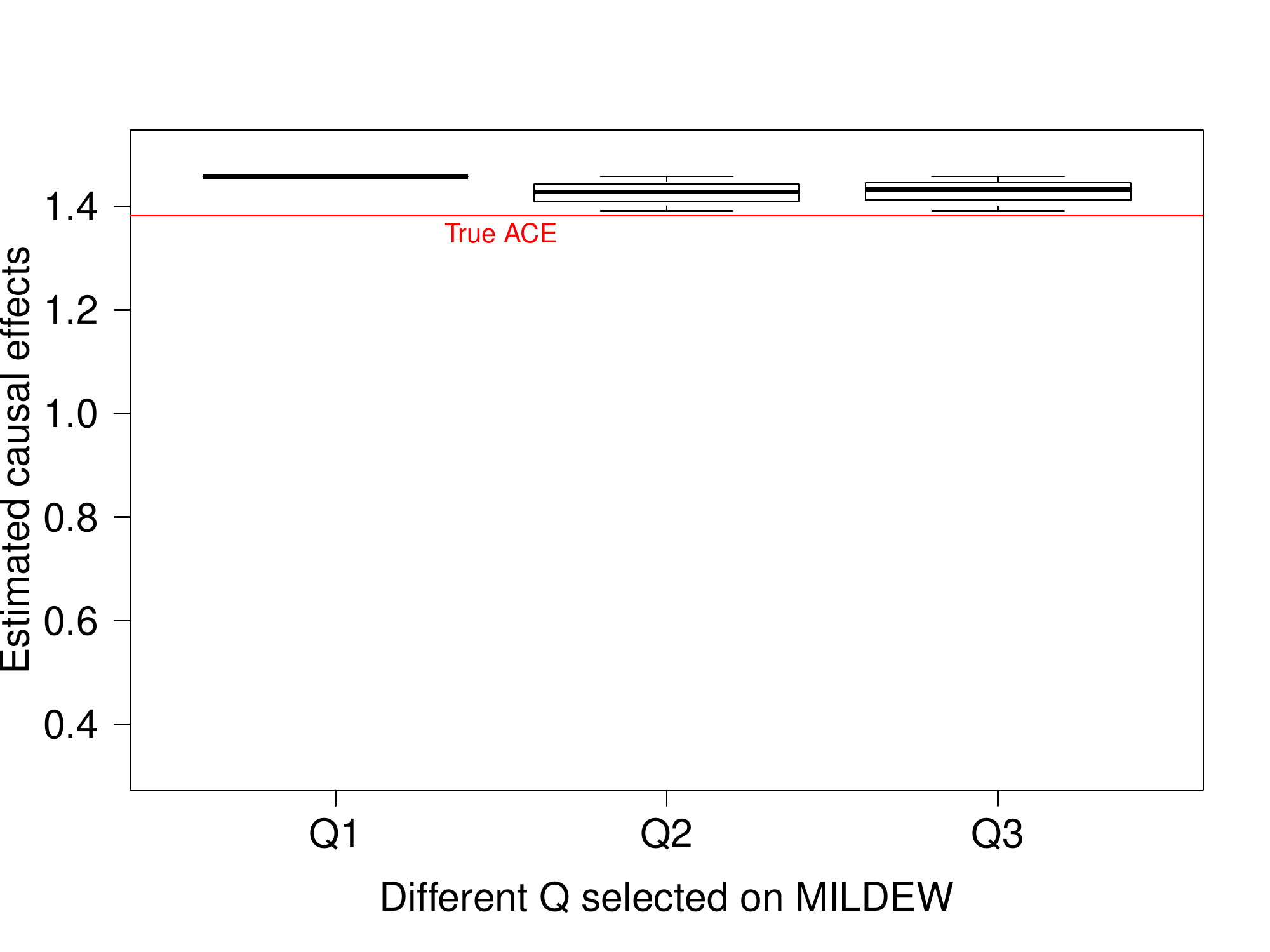}
\end{minipage}
\hfill
\begin{minipage}[t]{0.3\linewidth}
\centering
\includegraphics[width=5.8cm,height=4cm]{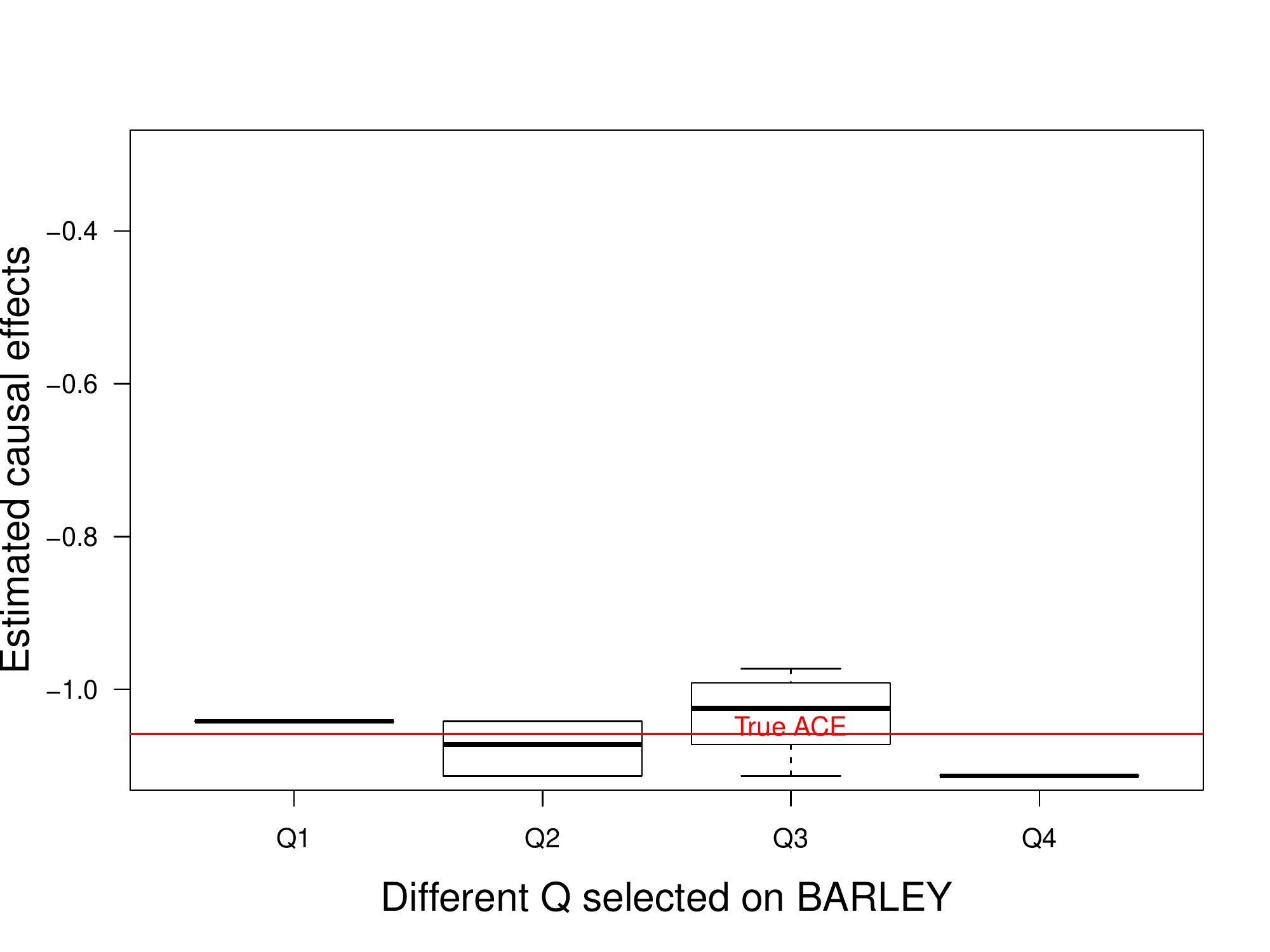}
\end{minipage}
\caption{The boxplots of the estimated causal effects under different COSO $Q$ on three synthetic datasets, i.e. INSURANCE, MILDEW and BARLEY which have more than one COSO variable. The range is taken from the smallest causal effect and the largest causal effect in Table~\ref{tab:BNs}. For each different COSO $Q$, the estimated causal effects are consistent and close the true causal effect on all of three datasets.} 
  \label{fig:COSO01}
\end{figure*}

\section{Related work}
\label{Sec:related}
Causal effect estimation from observational data has been an active research topic in statistics and artificial intelligence, and is largely based on two frameworks: potential outcome framework~\cite{rubin1974estimating,imbens2015causal} and graphic causal modelling~\cite{pearl2009causality,perkovic2017complete,Spirtes2000Causation}. There are also many research works on estimating two different types of causal effect, the average causal effect estimation in a whole population and conditional causal effect estimation in a subpopulation. This work belongs to the first type and we will briefly outline some works on conditional causal effect estimation at the end of this section.

For causal effect estimation, various adjustment criteria~\cite{pearl2009causality,de2011covariate,shpitser2012validity,perkovic2017complete,van2019separators} (or in general using $do$ calculus operation~\cite{pearl2009causality}) are commonly employed to approach the problem in graphical causal modelling. The methods in this category have been outlined in the Introduction, and hence we do not discuss them here. For the methods under the potential outcome framework, the covariate set is normally assumed that satisfies the unconfoundedness and a matching method based on covariate balance is used for causal effect estimation~\cite{li2017matching,rubin1974estimating,rubin2007design}. We do not intend to review the methods here and refer interested readers to~\cite{imbens2015causal,witte2019covariate}. Instead, we just briefly discuss representative methods based on potential outcome framework used in this paper. Matching is a practical way to balance the distributions of covariates in the treatment and control groups to achieve unbiased causal effect estimation. When the size of a covariate set is large, exact matching or the nearest neighbour matching is impractical. The propensity score collapses the covariate set to one dimension and is a sound way for balancing covariate distribution in treatment and control groups. Propensity Score Matching (PSM)~\cite{rosenbaum1983central} is the classic method for average causal effect estimation and is widely used in various applications~\cite{imbens2015causal,austin2019propensity}. Covariate balancing propensity score (CBPS)~\cite{imai2014covariate} is a major improvement of propensity score matching and has been shown good performance. Causal forest (CF)~\cite{wager2018estimation} uses the random forest to improve causal effect estimation based on matching. A covariate set may include some variables that are not necessary to be adjusted in causal effect estimation. The process of selecting a refined covariate set is called covariate set selection~\cite{de2011covariate,shpitser2012validity,entner2013data}. The recently causal graph has been used for covariate set selection, and \emph{CovSel}~\cite{haggstrom2018data} summarises the most widely used covariate selection criteria. Five criteria have been used in this paper.

The conditional causal effect estimation aims to remove the heterogeneous by identifying the subpopulations with different causal effects. For estimating conditional causal effect, many machine learning algorithms have been adapted for the task~\cite{athey2016recursive,kunzel2019metalearners,kuang2017estimating,yao2018representation}. We mention some typical works here, and a comprehensive review can be found in~\cite{guo2020survey,zhang2020unified}. Athey, Wager et al.~\cite{athey2016recursive} have adapted a tree-based algorithm for estimation. Louizos, Shalit et al.~\cite{louizos2017causal} have proposed to utilise Variational Auto-encoder to estimate the unknown latent space of confounders and conditional causal effects simultaneously. Hassanpour and Greiner~\cite{hassanpour2019counterfactual} have proposed a context-aware important sampling re-weighing scheme to address the distributional shift due to selection bias in conditional causal effect. K{\"u}nzel, Sekhon et al.~\cite{kunzel2019metalearners} have developed a meta-learner, the X-learner, for conditional causal effect estimation. Our work differs from all these works in that we estimate the average causal effect. Because of different purposes, these methods have not been compared in the experiment.

\section{Conclusion}
\label{Sec:Con}
We have identified a practical problem setting, in which the causal effects of $W$ on $Y$ can be estimated uniquely and unbiasedly from observational data with hidden variables. In particular, under our problem setting, we have developed the two theorems to ensure that it is sufficient and necessary to perform condition independence tests on data to find a proper adjustment set for unbiased causal effect estimation. We have proposed two data-driven algorithms, DAVS-$Q$ and DAVS, to obtain unique and more accurate causal effect estimation from data with hidden variables. Experimental results on synthetic and real-world datasets demonstrate that our proposed algorithms have achieved better results for data with hidden variables compared to the state-of-the-art causal effect estimators. Compared to existing causal effect estimators, our proposed algorithms have relaxed the pretreatment variable and allowed hidden variables, and are able to provide a unique and unbiased estimation. This will significantly improve the practical applications of data-driven causal effect estimation.

\ifCLASSOPTIONcaptionsoff
  \newpage
\fi



\bibliographystyle{IEEEtran}
\bibliography{IEEEabrv}

\end{document}